\documentclass[letterpaper]{article}

\usepackage[english]{babel}
\usepackage[T1]{fontenc}
\usepackage[utf8x]{inputenc}
\usepackage{fullpage}

\usepackage{amsthm}
\usepackage{mathrsfs}
\usepackage{amsfonts}
\usepackage{amssymb}
\usepackage{amsmath}
\usepackage{enumerate}

\usepackage{color}
\usepackage{graphicx}
\usepackage{graphics}
\usepackage{xspace}
\usepackage[ruled,vlined,algo2e]{algorithm2e}
\usepackage[color=green]{todonotes}
\usepackage{subcaption}

\usepackage{algorithmic}
\usepackage{algorithm}
\floatname{algorithm}{Procedure}

\def\cG{\mathcal{G}}

\newcommand{\contracted}[2]{\ensuremath{#1^{ch}(#2)}\xspace}
\newcommand{\contractedd}[2]{\ensuremath{#1^{str}(#2)}\xspace}

\newtheorem{thm}{Theorem}
\newtheorem{theorem}[thm]{Theorem}

\newtheorem{lemma}[thm]{Lemma}

\newtheorem{corollary}[thm]{Corollary}

\newtheorem{claim}[thm]{Claim}

\newtheorem{rem}[thm]{Remark}

\usepackage{authblk}

\title{Reconfiguration of graphs with connectivity constraints}
\date{\today}
\author[1]{Nicolas Bousquet\footnote{nicolas.bousquet@grenoble-inp.fr}}
\author[2]{Arnaud Mary\footnote{arnaud.mary@univ-lyon1.fr}}
\affil[1]{CNRS, G-SCOP, Grenoble-INP, Univ. Grenoble-Alpes, France. }
\affil[2]{LBBE, Universit\'e Claude Bernard Lyon 1, Lyon, France.}


\begin{document}

\maketitle

\begin{abstract}
A graph $G$ realizes the degree sequence $S$ if the degrees of its vertices is $S$.
Hakimi~\cite{HakiI} gave a necessary and sufficient condition to guarantee that there exists a connected multigraph realizing $S$. Taylor~\cite{taylor81} later proved that any connected multigraph can be transformed into any other via a sequence of flips (maintaining connectivity at any step). A flip consists in replacing two edges $ab$ and $cd$ by the diagonals $ac$ and $bd$.
In this paper, we study a generalization of this problem. A set of subsets of vertices $\mathcal{CC}$ is \emph{nested} if for every $C,C' \in \mathcal{CC}$ either $C \cap C' = \emptyset$ or one is included in the other. We are interested in multigraphs realizing a degree sequence $S$ and such that all the sets of a nested collection $\mathcal{CC}$ induce connected subgraphs. Such constraints naturally appear in tandem mass spectrometry.

We show that it is possible to decide in polynomial if there exists a graph realizing $S$ where all the sets in $\mathcal{CC}$ induce connected subgraphs. Moreover, we prove that all such graphs can be obtained via a sequence of flips such that all the intermediate graphs also realize $S$ and where all the sets of $\mathcal{CC}$ induce connected subgraphs. Our proof is algorithmic and provides a polynomial time approximation algorithm on the shortest sequence of flips between two graphs whose ratio depends on the depth of the nested partition.
\end{abstract}

\section{Introduction}
Let $G=(V,E)$ be a graph where $V$ denotes the set of vertices and $E$ the set of edges. All along the paper, unless otherwise specified, all the graphs are loop-free but may admit multiple edges. Reconfiguration problems consist in finding a step by step transformation between two  solutions of a given problem such that all intermediate states are also solutions. Reconfiguration problems arise in many different fields (e.g. graph theory~\cite{BonamyB13,CerecedaHJ11}, statistical physics~\cite{mohar4}, combinatorial games~\cite{HearnD05}, chemistry~\cite{Senior51} and peer-to-peer networks~\cite{Cooper07}) and received a considerable attention in the last few years.
For a complete overview of the reconfiguration field, the reader is referred to the recent surveys of van den Heuvel~\cite{Heuvel13} and Nishimura~\cite{Nishimura17}.
In this paper we consider the reconfiguration of graphs with a fixed degree sequence and its applications to cheminformatics.

The \emph{degree sequence} of a graph $G$ is the sequence of the degrees of its vertices in non-increasing order. Given a non-increasing sequence of integers $S=\{d_1,\ldots,d_n\}$, a graph $G=(V,E)$ with $V=\{v_1,\ldots,v_n\}$ \emph{realizes} $S$ if $d(v_i)=d_i$ for all $i\leq n$.

In the fifties, mathematicians tried to find conditions that guarantee that given a sequence of integers
$S=\{d_1,\ldots,d_n \}$, there exists a graph realizing $S$. Senior~\cite{Senior51} gave necessary and sufficient conditions for the case of connected (multi)graphs. Havel~\cite{Havel55} proposed a polynomial time algorithm that outputs a simple loop-free graph realizing $S$ if such a graph exists or returns no otherwise. Hakimi~\cite{HakiI} re-discovered the results of both Senior and Havel and  also proposed an algorithm that outputs a connected loop-free graph realizing $S$ if such a graph exists or returns no otherwise.



A \emph{flip} (also called \emph{swap} or \emph{switch} in the literature) on two edges $ab$ and $cd$ consists in deleting the edges $ab$ and $cd$ and creating the edges $ac$ and $bd$ (or $ad$ and $bc$)\footnote{In the case of multigraphs, we simply decrease by one the multiplicities of edges $ab$ and $cd$ and increase by one the ones of $ac$ and $bd$.}. The flip operation that transforms the edges $ab$ and $cd$ into the edges $ac$ and $bd$ is denoted $(ab,cd)\rightarrow (ac,bd)$. When the target edges are not important we will simply say that we flip the edges  $ab$ and $cd$.

Let $S=\{d_1,\ldots,d_n \}$ be a non-increasing sequence and let $G$ and $H$ be two graphs on $n$ vertices $v_1,\ldots,v_n$ realizing $S$. The graph $G$ can be \emph{transformed} into $H$ if there is a sequence of flips that transforms $G$ into $H$. Since flips do not modify the degree sequence, the intermediate graphs also realize $S$.
Let $\cG(S)$ be the graph whose vertices are loop-free multigraphs realizing $S$ and such that two vertices $G$ and $H$ of $\cG(S)$ are adjacent if $G$ can be transformed into $H$ via a single flip. Since the flip operation is reversible, the graph $\cG(S)$ is an undirected graph called the \emph{reconfiguration graph of $S$}. Note that there exists a sequence of flips between any pair of graphs realizing $S$ if and only if the graph $\cG(S)$ is connected.
In~\cite{HakiII}, Hakimi proved the following:

\begin{theorem}[Hakimi \cite{HakiII}]\label{thm:hakimiconn}
  Let $S$ be a non-increasing sequence. If the graph $\cG(S)$ is not empty, it is connected.
\end{theorem}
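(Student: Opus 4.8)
The plan is to fix one canonical realization $H$ of $S$ and to show that every loop-free multigraph $G$ realizing $S$ can be transformed into $H$ by a sequence of flips. Since a flip never changes the degree sequence, every intermediate graph still realizes $S$; and since the flip operation is reversible, a transformation $G \rightsquigarrow H$ together with $H \rightsquigarrow G'$ yields a walk between any two realizations $G$ and $G'$ in $\cG(S)$, which is exactly connectivity. To drive $G$ towards $H$ I would use the potential $\Phi(G) = \sum_{i<j} |m_G(v_iv_j) - m_H(v_iv_j)|$, where $m_G(uv)$ denotes the multiplicity of the edge $uv$ in $G$. This is a non-negative integer vanishing exactly when $G = H$, so it suffices to prove that as long as $G \neq H$ one can apply a flip producing $G'$ with $\Phi(G') < \Phi(G)$ (possibly after one auxiliary flip that leaves $\Phi$ unchanged, see below).

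The structural fact I would exploit is that $G$ and $H$ have the same degrees, so $\sum_j (m_G(v_iv_j) - m_H(v_iv_j)) = 0$ at every vertex $v_i$. Call a pair $uv$ \emph{red} if $m_G(uv) > m_H(uv)$ (an excess of $G$) and \emph{blue} if $m_G(uv) < m_H(uv)$ (a deficit of $G$), each taken with multiplicity $|m_G(uv) - m_H(uv)|$. The degree identity says that at every vertex the number of incident red edge-ends equals the number of incident blue edge-ends; hence the red and blue edges of the difference decompose into closed walks alternating between red and blue. Pick such a walk $v_0 v_1 v_2 v_3 \cdots$ with $v_0v_1$ red, $v_1v_2$ blue, $v_2v_3$ red, and so on. Both $v_0v_1$ and $v_2v_3$ are then genuine edges of $G$, and one checks easily that $v_0,v_1,v_2$ are pairwise distinct and $v_3 \neq v_1$.

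The improving move is the flip $(v_0v_1, v_2v_3) \rightarrow (v_1v_2, v_0v_3)$. It deletes the two red edges (each lowering $\Phi$ by one) and creates the blue edge $v_1v_2$ (lowering $\Phi$ by one more); the only possibly harmful effect is the newly created edge $v_0v_3$, which raises $\Phi$ by at most one. Hence the net change is at most $-2 < 0$, and iterating terminates at $H$.

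The hard part will be degeneracy. The flip above is legitimate only if it creates no loop, i.e. only if $v_0 \neq v_3$; when the alternating walk doubles back so that $v_3 = v_0$, the two red edges share the vertex $v_0$ and no flip on them can produce $v_1v_2$. In the shortest case (a $4$-edge walk) the would-be bad edge $v_0v_3$ is in fact the blue edge $v_3v_0$, so the move is even better and no loop arises; the genuine difficulty is confined to longer walks with coinciding vertices. The crux of the argument is therefore to show that in every degenerate configuration progress is still possible — either by selecting a different pair of consecutive red edges along the same walk, or, in the worst case, by first performing a $\Phi$-neutral flip that breaks the coincidence and sets up a strictly improving flip on the next step. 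Establishing that such a move always exists (so that the potential eventually reaches $0$) is where the real work lies; the reduction, the potential, and the alternating-walk decomposition are then routine.
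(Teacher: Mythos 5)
You should first note that the paper does not prove this statement at all: Theorem~\ref{thm:hakimiconn} is quoted from Hakimi's paper and used as a black box, so there is no in-paper proof to match against. Judged on its own terms, your setup is the standard and correct one (it is also the skeleton of the paper's Lemma~\ref{lem:4approxmain} for the connected case): the potential $\Phi=\delta(G,H)$, the observation that $G-H$ and $H-G$ give every vertex equal ``red'' and ``blue'' degree, the decomposition into alternating closed trails, and the accounting showing a flip of two red edges onto a needed blue edge drops $\Phi$ by at least~$2$. Your non-degeneracy checks ($v_0\neq v_1$, $v_1\neq v_2$, $v_0\neq v_2$, $v_3\neq v_1$, since $G$ and $H$ are loop-free and no pair can be simultaneously red and blue) are also right, and you correctly isolate $v_0=v_3$ as the only obstruction.

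The problem is that the isolated obstruction is exactly where the theorem lives, and you explicitly leave it unproved (``establishing that such a move always exists\dots is where the real work lies''). Concretely, the bad configuration is: a blue edge $xy$ you need to create, such that \emph{every} red edge at $x$ and \emph{every} red edge at $y$ ends at one common vertex $z$; then the only candidate flip $(xz,yz)\rightarrow(xy,zz)$ creates a loop, and ``choosing a different pair of consecutive red edges on the same walk'' may not help, since the walk may offer no other pair. Your fallback (``a $\Phi$-neutral flip that breaks the coincidence'') is a plan, not an argument, and as stated the induction could stall. The missing ingredient is a counting argument of the type the paper uses inside Claim~\ref{f1}: assume that for every blue neighbour $w$ of $v$ all red edges at $w$ return to $u$, and symmetrically; summing red/blue degrees over the blue neighbourhoods then forces $deg_{H-G}(u)\ge deg_{H-G}(v)+1$ and $deg_{H-G}(v)\ge deg_{H-G}(u)+1$, a contradiction. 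This shows a disjoint red edge $wx$ with $uw$ or $vx$ blue always exists, so a single strictly improving, loop-free flip is available in every case and no auxiliary neutral flip is needed. Until you supply that (or an equivalent) argument, the proof has a genuine gap precisely at its crux.
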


A connected reconfiguration graph has some interesting consequences for sampling or enumerating solutions. For instance, it implies that all the solutions can be enumerated with polynomial delay (as long as we get one of them). An enumeration algorithm is an algorithm that lists without repetition all the solutions of a given problem.
An algorithm solves an enumeration problem \emph{with polynomial delay}, if the delay between two consecutive outputs is bounded by a polynomial of the input size.
Any reconfiguration problem such that the reconfiguration graph is connected and the number of local operations (in our case flips) is polynomial admits a polynomial delay enumeration algorithm. So Theorem~\ref{thm:hakimiconn} ensures that there exists an algorithm that enumerates with polynomial delay all the graphs realizing $S$. Note however that the space needed by this algorithm might be exponential. As far as we know, the existence of a polynomial delay algorithm with polynomial space to generate all the graphs realizing $S$ is open.
\smallskip

One can wonder if the reconfiguration graph is still connected if only we consider graphs with additional properties.
For a graph property $\Pi$, let us denote by $\cG(S,\Pi)$ the subgraph of $\cG(S)$ induced by the
graphs realizing $S$ with the property
$\Pi$. If we respectively denote by $\mathscr{C}$ and $\mathscr{S}$ the property of being
connected and simple, Taylor proved in \cite{taylor81} that
$\cG(S,\mathscr{C})$, $\cG(S,\mathscr{S})$ and $\cG(S,\mathscr{C} \wedge
\mathscr{S})$ are connected ($\wedge$ stands for ``and'').
Let $G,H$ be two graphs of $\cG(S,\Pi)$. A sequence of flips \emph{transforms $G$ into $H$ in $\cG(S,\Pi)$} if the sequence of flips transforms $G$ into $H$ and all the intermediate graphs also have the property $\Pi$. Note that it is equivalent to find a path between $G$ and $H$ in $\cG(S,\Pi)$.


\paragraph{Applications to mass spectrometry.}

Mass spectrometry is a technique used to measure the mass-to-charge ($m/z$) ratio of molecules.
The process results in a $m/z$-spectrum, whose deconvolution provides a histogram with the quantity of each complex. Given this histogram, chemists can determine how many atoms of each type compose the molecule (i.e. the chemical formula of the molecule). With this chemical formula, we want to understand the structure of the molecule. Two question naturally arise: (i) Can we find a molecule structure satisfying this formula? (ii) Can we find all of them? Two molecules with the same chemical formula are called \emph{structural isomers}.

The problem of determining the structure of a molecule given its chemical formula, can be formulated as a combinatorial problem.
Let $v_1,\ldots,v_n$ be the $n$ atoms of the molecules. The degree of each atom $v_i$ is its valence. The questions then become: (i) Can we find a connected loop-free (multi-)graph on vertices $v_1,\ldots,v_n$ for which the degree of each $v_i$ is equal to the valence of its corresponding atom? (ii) If yes, can we generate all of them? As we have already seen, Hakimi~\cite{HakiII} and Taylor~\cite{taylor81} answered positively to both questions: we can enumerate with polynomial delay all the graphs in $\cG(S,\mathscr{C})$.

In the last few years, with the development of tandem mass spectrometry, we get more information on the structure of the original molecule. With this technology, we can again break the molecule into several fragments which in turn can be broken into other fragments...etc... For each produced fragment of this subdivision, we can determine its atoms constitution. Finally, we can obtain a tree of fragments, where each fragment corresponds to a part of the molecule that have to be connected. Rephrased in terms of graphs, it means that instead of simply knowing that the whole graph is connected, we are given a collection of subsets of vertices that have to induce connected subgraphs.
Since the number of graphs realizing a degree sequence is usually large, this additional information can drastically reduce the number of possible molecules.

\paragraph{Our results.}
 A collection $\mathcal{CC}$ of subsets of vertices is \emph{nested} if for every pair $C_i,C_j$ in $\mathcal{CC}$, either $C_i \cap C_j= \emptyset$ or one is included in the other. The \emph{height} $d$ of a nested partition $\mathcal{CC}$ is the maximum number of sets $C_{i_1},\ldots,C_{i_d}$ in $\mathcal{CC}$ such that $C_{i_1} \subsetneq C_{i_2} \subsetneq \ldots \subsetneq C_{i_d}$.

 Let $S=\{ d_1,\ldots,d_n\}$ be a degree sequence and $\mathcal{CC}$ be a nested collection such that $V \in \mathcal{CC}$. Let us denote by $\cG(S,\mathcal{CC})$ the subgraph of $\cG(S)$ induced by the graphs such that $G[C]$ is connected for $C$ in $\mathcal{CC}$. We study the three following questions:
 \begin{enumerate}[(i)]
  \item\label{q1} Is it possible to find in polynomial time a graph $G$ in $\cG(S,\mathcal{CC})$ if such a graph exists?
  \item\label{q2} Is $\cG(S,\mathcal{CC})$ a connected subgraph of $\cG(S)$?
  \item\label{q3} If yes, is it possible to find or approximate a shortest transformation between two graphs of $\cG(S,\mathcal{CC})$?
 \end{enumerate}

In Section~\ref{sec:realizability}, we answer positively to (\ref{q1}). We actually provide a necessary and sufficient condition for a graph to be realizable and then prove that this characterization can actually be turned into an algorithm.

In Section~\ref{sec:connectivity}, we answer to both (\ref{q2}) and (\ref{q3}). We show that, given two graphs $G,H$ in $\cG(S,\mathcal{CC})$, there always exists a transformation between $G$ and $H$ in $\cG(S,\mathcal{CC})$. To prove it, we exhibit an algorithm that finds a sequence of at most $(2 d+1) \delta(G,H)$ flips transforming $G$ into $H$, where $d$ is the height of the nested partition and $\delta(G,H)$ is the size of the \emph{symmetric difference} (see paragraph Notations for a formal definition).
Since the length of a minimum transformation between $G$ and $H$ is at least $\delta(G,H)/4$ (a flip decreases by at most four the size of the symmetric difference), we get an $(8d+4)$-approximation of the shortest sequence.
Note that it also provides as an immediate corollary a polynomial delay algorithm to enumerate all the graph in $\cG(S,\mathcal{CC})$.

\begin{theorem}\label{thm:main}
Let $\mathcal{CC}$ be a nested collection of subsets that contains $V$.
The graph $\cG(S,\mathcal{CC})$ is connected and the distance between any pair of graphs $G$ and $H$ in $\cG(S,\mathcal{CC})$ is at most $(2d+1)\delta(G,H)$.

Moreover, there exists a polynomial time algorithm that, given $G,H \in \cG(S,\mathcal{CC})$, computes a sequence of flips transforming $G$ into $H$ in $\cG(S,\mathcal{CC})$ of length at most $(8d+4) OPT$ where $OPT$ denotes the length of a shortest sequence.
\end{theorem}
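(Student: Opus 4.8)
The plan is to reduce the approximation claim to the distance bound and then to establish the distance bound constructively. For the reduction, note that a single flip modifies at most four edge-slots, so it decreases $\delta(\cdot,H)$ by at most $4$; hence any transformation from $G$ to $H$ has length at least $\delta(G,H)/4$, i.e. $OPT \ge \delta(G,H)/4$. Consequently, an explicit transformation of length at most $(2d+1)\delta(G,H)$ is automatically an $(8d+4)$-approximation, and it remains only to exhibit, in polynomial time, such a transformation inside $\cG(S,\mathcal{CC})$. The case $d=1$ (the collection $\{V\}$) specializes to the connected setting treated by Taylor~\cite{taylor81}, which I would use as a sanity check rather than as a black box.

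To build the transformation, I would first record the laminar structure: since $\mathcal{CC}$ is nested and contains $V$, its members form a rooted tree ordered by inclusion, and every vertex lies on a chain of at most $d$ sets. Next I would work with the \emph{symmetric difference} of $G$ and $H$, coloring an edge \emph{red} if $G$ contains more copies of it than $H$ and \emph{blue} in the opposite case. Because $G$ and $H$ realize the same degree sequence, at every vertex the numbers of incident red and blue edge-ends coincide, so the red and blue edges decompose into closed alternating walks. Along such a walk a single \emph{productive flip} of the form $(ax,by)\rightarrow(ab,xy)$ — deleting two red edges and creating one genuinely blue edge $ab$ of $H$ — decreases $\delta(\cdot,H)$. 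Iterating productive flips resolves all alternating walks and hence reaches $H$; the only difficulty is that a productive flip need not stay inside $\cG(S,\mathcal{CC})$.

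The heart of the argument is therefore a \emph{repair} subroutine that restores the connectivity constraints after one productive flip using at most $2d$ auxiliary flips, so that each productive step costs at most $2d+1$ flips and, summing over the rounds, the total length is at most $(2d+1)\delta(G,H)$. Here nesting is crucial. Adding the blue edge $ab$ can only help the sets $C$ containing both $a$ and $b$; the danger is that deleting the red edges $ax,by$, or inserting the by-product edge $xy$, disconnects some set $C\in\mathcal{CC}$. I would repair the affected sets along the chain containing the modified edges, proceeding from the innermost outward: for a disconnected set $C$, the fact that $H[C]$ is connected certifies that a suitable rerouting edge exists, and a constant number of flips inside $G[C]$ reconnects it. Nestedness then guarantees that such an inner repair uses only edges of $C$ and so cannot disconnect any set properly containing $C$, which lets the repairs propagate monotonically up the at most $d$ levels of the chain without cascading back down.

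The main obstacle I expect is precisely this repair step, in two parts: (i) showing that whenever a productive flip breaks the connectivity of some $C$ a reconnecting flip is always available — this is exactly where the hypothesis $H\in\cG(S,\mathcal{CC})$ is used, as $H$ witnesses that the edge $ab$ can coexist with a connected $C$ — and (ii) ordering the repairs so that fixing an outer set never re-breaks an already-repaired inner set, which is what keeps the number of auxiliary flips linear in the height $d$ rather than unbounded. Once this progress-plus-repair lemma is established, the connectivity of $\cG(S,\mathcal{CC})$ is immediate, the distance bound follows by summing the flips over the productive steps, and tracking the flips throughout yields the claimed polynomial-time algorithm.
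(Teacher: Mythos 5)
Your reduction of the approximation guarantee to the distance bound via $OPT \ge \delta(G,H)/4$ is exactly the paper's, and the laminar-tree setup is right. But the heart of your argument --- the progress-plus-repair lemma asserting that each productive flip can be followed by at most $2d$ auxiliary flips restoring every violated constraint --- is precisely the hard part of the theorem, and you have not supplied it. Three concrete obstacles remain. First, the existence of a reconnecting flip for a broken set $C$ is not certified merely by $H[C]$ being connected: a repair flip must preserve the degree sequence, avoid loops, avoid increasing the symmetric difference (otherwise your $(2d+1)\delta(G,H)$ accounting collapses), and reconnect $G[C]$, all at once; already in the unconstrained connected case the paper needs a two-flip case analysis (Lemma~\ref{lem:4approxmain}, Case~2) because the natural single flip can be forced to disconnect the graph. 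Second, a repair flip ``inside $G[C]$'' involves four vertices and can itself disconnect children of $C$ that were already repaired; your innermost-outward ordering shows inner repairs do not break outer sets, but says nothing about outer repairs re-breaking inner ones, which is exactly where cascading (and an unbounded flip count) could occur. Third, you never verify that the by-product edge $xy$ of a productive flip, or the edges created by repairs, keep $\delta$ non-increasing, which your final summation silently assumes.

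The paper avoids all of this with a different architecture: it processes the laminar tree top-down, one level at a time, working in the quotient graphs $\contractedd{G}{T'}$ obtained by contracting the children of the current node. At each level it first equalizes the degree sequences of the two quotients (Lemma~\ref{lem:diffdegree}) and then invokes the connected-case $4$-approximation (Theorem~\ref{thm:connected}) as a black box on an auxiliary graph with pendant vertices (Lemma~\ref{lem:samedegree}); Lemmas~\ref{lem:wellstructured1} and~\ref{lem:conditionflip} guarantee that every quotient flip lifts to a flip of $G$ that is correct for all constraints, and inner sets are never endangered because flips at a given level only permute edges between already-contracted connected blocks. If you want to pursue your local-repair route you must prove the repair lemma in full; as written, the proof has a gap exactly where the difficulty lies.
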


We moreover show that finding a shortest sequence of flips between two graphs in $\cG(S,\mathscr{C})$ is NP-complete. It in particular implies as an immediate corollary that it is NP-complete for two graphs in $\cG(S,\mathcal{CC})$.
The proof follows the scheme of the NP-hardness proof of Will~\cite{Will99} in the case of simple graphs and is given in appendix.

In order to prove Theorem~\ref{thm:main}, we need as a black-box an approximation algorithm of the shortest transformation between two graphs in $\cG(S,\mathscr{C})$. In Section~\ref{sec:approx-classique}, we give an algorithm that provides a transformation from $G$ into $H$ in $\cG(S,\mathscr{C})$ of size at most four times the optimal one for any pair of graphs $G,H$ in $\cG(S,\mathscr{C})$. This result also provides an alternative proof of the result of Taylor~\cite{taylor81}.

\medskip

\noindent\textbf{Notations.}\\
All along the paper, we consider unoriented loop-free multigraphs. Given two graphs $G$ and $H$ on the same vertex set $V$, we denote by $G\Delta H$ their symmetric  difference i.e. the (multi)set of edges such that $e$ appears in $G\Delta H$ with multiplicity $r>0$ if the difference between the multiplicities of $e$ in $G$ and in $H$ is equal to $r$ or $-r$. We denote by $\delta(G,H)$ the size of $G\Delta H$. By abuse of notations, we often assimilate $G \Delta H$ to the graph $G=(V,G\Delta H)$. We denote by $G -H$ the (multi)set of edges such that $e$ appears in $G - H$ with multiplicity $r>0$ if the difference between the multiplicities of $e$ in $G$ and in $H$ is equal to $r$. For simple graphs it simply corresponds to the edges which are in $G$ and not in $H$. We also assimilate $G -H$ to the graph $(V,G-H)$. Note that $\delta(G,H)$ is twice the number of edges in $G - H$.
Finally let $G \cap H$ be the set of edges containing $e$ is an edge with multiplicity $r$ if the minimum multiplicity of $e$ in $G$ and $H$ is exactly $r$. As for $G - H$ and $G \Delta H$, we assimilate $G \cap H$ to $(V,G \cap H)$. Note that $E(G) = E(G \cap H) \cup E(G-H)$.

\section{$4$-Approximation for connected graphs}\label{sec:approx-classique}
Let $S$ be a non-increasing degree sequence. In~\cite{taylor81}, Taylor proved that $\cG(S,\mathscr{C})$ is connected. However, his proof does not immediately provide an approximation algorithm of the shortest transformation between pairs of graphs in $\cG(S,\mathscr{C})$. In this section we give an alternative proof of the result of Taylor that provides a $4$-approximation algorithm of the shortest transformation between any pair of graphs in $\cG(S,\mathscr{C})$.

\begin{theorem}\label{thm:connected}
Let $S$ be a non-increasing degree sequence and $G$ and $H$ in $\cG(S,\mathscr{C})$. We can find in polynomial time a sequence of at most $\delta(G,H)$ flips transforming $G$ into $H$ in $\cG(S,\mathscr{C})$. \\
Moreover this transformation never flips any edge that is already in both $G$ and $H$ \footnote{We say that an edge $e$ in $G \cap H$ is never flipped if the multiplicity of $e$ at any intermediate step never goes below the multiplicity of $e$ in $G \cap H$.}.
\end{theorem}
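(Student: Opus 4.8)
The plan is to argue by induction on $\delta(G,H)$, showing that one can always reduce the symmetric difference by at least $2$ using at most two flips, never touching an edge of $G \cap H$ and keeping the graph connected; since $\delta(G,H)$ decreases to $0$, this produces a transformation of at most $\delta(G,H)$ flips. It is convenient to view the process as fixing the ``core'' $K = G \cap H$ and rewriting the ``red'' edge set $R = G - H$ into the ``blue'' edge set $B = H - G$. As $G$ and $H$ realize the same sequence, $\deg_G(v) = \deg_H(v)$ for every $v$, hence $\deg_R(v) = \deg_B(v)$ for every $v$; in particular every endpoint of a blue edge is also incident to a red edge.

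First I would isolate the productive flip. Pick any blue edge $xy \in B$. By the degree equality there are red edges $xa$ and $yb$ at $x$ and $y$, and whenever they can be chosen with $a \neq b$ the flip $(xa,yb) \to (xy,ab)$ is legal. A direct multiplicity count shows this flip removes the two red copies $xa,yb$, fills a copy of the blue edge $xy$, and creates $ab$; whatever the status of $ab$, the quantity $\delta(\cdot,H)$ drops by $4$ if $ab \in H-G$ and by $2$ otherwise. Crucially the two deleted edges lie in $R$, so no edge of $K$ is ever used as input, and the invariant ``no common edge drops below its multiplicity in $G \cap H$'' is preserved: defining the red edges throughout as the \emph{current} excess over $H$ keeps the baseline copies of $K$ untouched, while it is harmless if a later flip temporarily raises and then lowers a copy of $ab$. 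Thus a single well-chosen productive flip already realizes one induction step.

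The main obstacle is connectivity: deleting $xa$ and $yb$ may disconnect $G$ even after inserting $xy$ and $ab$, and in a degenerate situation the only available red neighbours of $x$ and $y$ coincide ($a=b$), forbidding the flip altogether. Both difficulties are resolved by spending one preparatory flip before the productive one, which is why a step costs up to two flips. For the degenerate case (for instance $xa,ya \in R$ with no other red neighbour of $x$ or $y$) one notes $|R| \geq 2$ and picks another red edge $cd$; a flip of the form $(ya,cd)\to(yc,ad)$ relocates a red edge of $y$ onto a new endpoint $c \neq a$, after which the productive flip applies. For the connectivity case I would show that the productive flip preserves connectivity as soon as $a$ and $b$ lie in the same component of $G - \{xa,yb\}$ (the new edge $xy$ then keeps $x,y$ attached to them), and that when the naive choice fails one can first perform a preparatory flip among red edges to route $a$ to $b$. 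The existence of such a preparatory move is where the connectivity of $H = K \cup B$ is used, as it guarantees that any fragment to be reattached is linked to the rest through blue and common structure rather than being globally isolated.

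Finally I would assemble the induction: each step uses at most two flips, lowers $\delta(\cdot,H)$ by at least two, keeps the graph connected, and never decreases a copy of an edge of $G \cap H$; iterating until $R = \emptyset$, i.e.\ until $G = H$, reaches $H$ after at most $\delta(G,H)$ flips, with every intermediate graph lying in $\cG(S,\mathscr{C})$. All choices (a blue edge, incident red edges, a preparatory flip) are found by inspecting incidences and testing connectivity, so each step---and hence the whole transformation---is computed in polynomial time. The delicate point to get right is the exhaustive verification that one preparatory flip always suffices to simultaneously escape the degenerate $a=b$ configuration and restore connectivity; I expect this case analysis, leaning on the connectivity of both $G$ and $H$, to be the technical heart of the argument.
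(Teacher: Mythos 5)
Your skeleton is the same as the paper's: iterate a step of at most two connectivity-preserving flips that decreases $\delta(G,H)$ by at least $2$ and never dips below the multiplicities of $G\cap H$ (this is exactly Lemma~\ref{lem:4approxmain}), giving at most $\delta(G,H)$ flips in total. The counting, the "two red edges in, at least one blue edge out" shape of the productive flip, and the observation that only edges of $G-H$ are ever consumed are all fine. But the proof has a genuine gap precisely where you say you "expect the technical heart" to be: you never actually establish that one preparatory flip always suffices to escape the degenerate configuration and to preserve connectivity, and this is the entire mathematical content of the statement. Worse, the one concrete criterion you do offer is wrong as justified: if both $xa$ and $yb$ are bridges and $G-\{xa,yb\}$ splits into three components with $x$ and $y$ in the two outer ones and $a,b$ together in the middle one, then $a$ and $b$ lie in the same component, yet the flip $(xa,yb)\to(xy,ab)$ joins the two outer components to each other and leaves the middle component isolated --- the new edge $xy$ does \emph{not} "keep $x,y$ attached to" $a,b$. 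Ruling such configurations out requires using the connectivity of $H$, which you invoke only in passing.

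For comparison, the paper closes this gap in two moves. First (Claim~\ref{f1}), if some edge of $G-H$ lies on a cycle of $G$, a single productive flip exists and trivially preserves connectivity; the existence of the partner red edge with $a\neq b$ is extracted from a degree-counting contradiction, which is where your "degenerate case" is actually disposed of. Otherwise every edge of $G-H$ is a bridge, and contracting the connected components of $G\cap H$ turns $G$ into a tree $G'$ whose edges are exactly the edges of $G-H$; the argument then picks a \emph{leaf} of this tree and runs a two-case analysis (is the component $W$ on the tree path between $S_1$ and $S_3$ or not?) to exhibit either one productive flip or a preparatory flip followed by a productive one, with connectivity read off from the tree structure. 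Your proposal contains no analogue of this tree of contracted common components, and without some such global structure on the bridges of $G-H$ there is no way to verify that the preparatory flip you postulate exists and is itself connectivity-preserving. As written, the argument is a plan rather than a proof.
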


The size of a shortest transformation is at least $\delta(G,H)/4$ since at most two edges of $G$ can be flipped on edges of $H$ at every step. So Theorem~\ref{thm:connected} provides a $4$-approximation algorithm. The remaining of this section is devoted to prove the following lemma whose iterated application immediately implies Theorem~\ref{thm:connected}. We say that a flip \emph{maintains connectivity} (of a connected graph $G$) if the resulting graph after the flip is still connected. A sequence of flips maintains connectivity if all intermediate graphs are connected.

\begin{lemma}\label{lem:4approxmain}
Let $S$ be a non-increasing degree sequence and $G$ and $H$ in $\cG(S,\mathscr{C})$. There exists a sequence of at most two flips in maintaining connectivity that decreases by at least two the size of the symmetric difference. Moreover the sequence never flips any edge that is already in both $G$ and $H$.
\end{lemma}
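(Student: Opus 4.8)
The plan is to locate a short alternating structure in $G\Delta H$, flip along it, and repair connectivity only in the one configuration where it fails. First I would exploit the degree constraint: since $G$ and $H$ both realize $S$, every vertex $v$ satisfies $d(v)=\deg_{G\cap H}(v)+\deg_{G-H}(v)=\deg_{G\cap H}(v)+\deg_{H-G}(v)$, so the number of $G-H$ edges at $v$ equals the number of $H-G$ edges at $v$. Colouring $G-H$ edges red and $H-G$ edges blue, each vertex has equal red- and blue-degree, so (pairing one red with one blue at each vertex) the edges of $G\Delta H$ decompose into alternating closed walks. Assuming $\delta(G,H)>0$, I take a shortest alternating cycle; being shortest it is a simple cycle, and since colours alternate it has even length $\ge 4$. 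Picking three consecutive edges gives distinct vertices $a_0,a_1,a_2,a_3$ with $a_0a_1,a_2a_3\in G-H$ and $a_1a_2\in H-G$, and I base everything on the single candidate flip
\[
\varphi\colon\ (a_0a_1,\,a_2a_3)\ \rightarrow\ (a_1a_2,\,a_0a_3).
\]
(The alternating cycle, rather than a mere red--blue--red path, guarantees $a_0\neq a_3$, so $a_0a_3$ is not a loop.)

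Next I would verify the two cheap properties, which hold regardless of connectivity. The flip $\varphi$ deletes the red edges $a_0a_1,a_2a_3\in G-H$ and creates $a_1a_2\in H-G$ together with $a_0a_3$. Thus $|G-H|$ drops by $2$ and rises by at most $1$ (only if $a_0a_3\notin H$), so it decreases by at least one and $\delta(G,H)=2\,|G-H|$ decreases by at least two (by four when $a_0a_3\in H$). Moreover $\varphi$ only lowers the multiplicities of $a_0a_1$ and $a_2a_3$, both of which exceed their multiplicity in $H$; after the decrement they remain at least their $H$-multiplicity, i.e.\ at least their $G\cap H$-multiplicity, so no edge of $G\cap H$ is ever flipped. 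Everything hinges on connectivity.

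The hard part will be the case where $\varphi(G)$ is disconnected, so I would first pin down exactly when this happens. Writing $G_0=G-a_0a_1-a_2a_3$, which has at most three components, $\varphi(G)$ can fail to be connected only if $G_0$ is disconnected and neither added edge reconnects it. Contracting the components of $G_0$ and tracking where $a_0,a_1,a_2,a_3$ land, one checks that $\varphi(G)$ is disconnected \emph{if and only if} $\{a_0a_1,a_2a_3\}$ is exactly the set of $G$-edges crossing some cut $(Y,\overline{Y})$ with $a_1,a_2\in Y$ and $a_0,a_3\in\overline{Y}$; in that case both new edges $a_1a_2$ and $a_0a_3$ stay inside their sides. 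In every other configuration at least one added edge crosses the relevant cut, $\varphi(G)$ is connected, and the lemma follows with a single flip making progress $\ge 2$.

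It remains to treat this cut case, and here I would bring in the connectivity of $H$. Since the only $G$-edges across $(Y,\overline{Y})$ are the red edges just removed, no $G\cap H$ edge crosses the cut, so every $H$-edge across it lies in $H-G$; as $H$ is connected there is a blue crossing edge $uv$ with $u\in Y$, $v\in\overline{Y}$. I would use $uv$ to reroute: perform a first connectivity-preserving flip that installs a crossing edge into $G$ while deleting only red edges (so $\delta$ does not increase), and then perform the progress flip, which now leaves a crossing edge in place and hence keeps the graph connected; the two flips together remove a red edge and add the blue edge $a_1a_2$, giving a net decrease of at least two. The main obstacle is precisely orchestrating this two-flip sequence so that \emph{both} intermediate graphs are connected, no $G\cap H$ edge is touched, and the net change in $\delta$ is a genuine decrease; I expect this to reduce to a finite case distinction according to the positions of $u$, $v$ and their red partner edges relative to the cut $(Y,\overline{Y})$, exploiting that whenever the partner edges are red the corresponding single flip already succeeds, so only the remaining degenerate placements require the explicit rerouting.
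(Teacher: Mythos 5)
Your overall strategy (trade two $G-H$ edges for an $H-G$ edge via one flip, characterize the disconnecting obstruction as a cut crossed only by the two deleted edges, and repair using the connectivity of $H$) is a legitimate alternative to the paper's route, which instead contracts the components of $G\cap H$ into a tree and works from a leaf of that tree. But as written there are two gaps. The first is in your opening step: it is \emph{not} true that a shortest alternating closed walk in $G\Delta H$ is a simple cycle. Take $V=\{c,x_1,x_2,x_3,x_4\}$ with $G-H=\{cx_1,cx_2,x_3x_4\}$ and $H-G=\{cx_3,cx_4,x_1x_2\}$ (realizable by adding common edges $x_1x_3,x_2x_4$ to both graphs): every vertex has equal red- and blue-degree, yet there is no alternating simple cycle at all --- any alternating cycle through $c$ gets stuck at $x_2$ (or symmetrically), and the graph minus $c$ is a matching --- so the shortest alternating closed walk has length $6$ and visits $c$ twice. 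This gap is repairable: what you actually need is only a red--blue--red path $a_0a_1,a_1a_2,a_2a_3$ with $a_0\ne a_3$ (the degeneracies $a_0=a_2$ and $a_1=a_3$ are excluded automatically because no vertex pair carries both a $G-H$ and an $H-G$ edge), and its existence follows from a degree-counting argument in the spirit of the paper's Claim~\ref{f1} rather than from simplicity of the walk; but you must supply that argument.

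The second gap is the serious one: the disconnecting case is left as a plan. Your characterization of when $\varphi$ disconnects is correct, and connectivity of $H$ does supply a blue edge $uv$ across the cut $(Y,\overline{Y})$. But the proposed repair flip --- take red edges $ur$ and $vs$ and perform $(ur,vs)\rightarrow(uv,rs)$ --- is not shown to exist in usable form: $r$ may equal $s$ (creating a loop), $ur$ or $vs$ may be bridges so that the first flip itself disconnects the graph, and $\{u,r\}$ or $\{v,s\}$ may coincide with $\{a_0,a_1\}$ or $\{a_2,a_3\}$, destroying the edges needed for the second flip. Note also that if this first flip did preserve connectivity it would already decrease $\delta$ by two on its own, so the whole content of the lemma sits exactly in the situation where no single flip works --- and that is precisely what your sketch defers to ``a finite case distinction'' that is never carried out. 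The paper's proof spends most of its effort there: it first reduces to the case where every $G-H$ edge is a bridge, contracts $G\cap H$ into a tree, starts from a leaf $S_1$, and proves that the intermediate node $S_3$ is not a leaf so that a safe preliminary flip $(v_1v_2,v_3'v_4)\rightarrow(v_1v_3',v_2v_4)$ exists before the progress flip. Until you supply the analogous analysis for your cut-based setup, the lemma is not proved.
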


\begin{proof}
In order to prove it, let us first prove that there exist cases where we can easily decrease the symmetric difference in one step.

\begin{claim}\label{f1}
If an edge of $G-H$ is contained in a cycle of $G$ \footnote{Two parallel edges form a cycle of length $2$.}, then there exists a flip maintaining the connectivity that decreases by at least two the size of the symmetric difference.
\end{claim}
\begin{proof}
Let $e=uv$ be an edge of $G-H$ which is contained in a cycle of $G$.

Let us first prove that there exists an edge $wx$ of $G-H$ with $w,x$
different from $u,v$ such that either $uw$ or $vx$ is an edge of $H - G$.
Assume for contradiction that for all edges $uw$ (resp. $vx$) in $H-G$ the only edges incident to $w$ (resp. $x$) in $G-H$ are incident to $v$ (resp. $u$). Let $N$ be the neighborhood of $v$ $H-G$. Since the degree of any vertex in $G-H$ is equal to the one in $H-G$  the sum of the degrees of the vertices of $N$ in $G-H$ is at least $deg_{H-G}(v)$. Since we assumed that any edge incident to $N$ in $G-H$ is also incident to $u$, and since $uv$ is an edge of $G-H$, we have $deg_{G-H}(u)\geq deg_{H-G}(v) + 1$. Thus $deg_{H-G}(u)\geq deg_{H-G}(v)+1$. Using the same arguments symmetrically, we have  $deg_{H-G}(v)\geq deg_{H-G}(u)+1$.

So there exists an edge $wx$ of $G-H$ with $w,x$ different from $u,v$ such that either $uw$ or $vx$ is an edge of $H-G$.

Then the flip $(uv,wx) \rightarrow (uw, vx)$  decreases by $2$ the size of the size of the symmetric difference and the resulting graph is still connected since $e$ was in a cycle of $G$. Moreover, since $w$ and $x$ are distinct from $u,v$, the resulting graph is loopless.
\end{proof}

If Claim~\ref{f1} can be applied, Lemma~\ref{lem:4approxmain} holds. So we can assume that all the edges of $G-H$ do not belong to a cycle of $G$, i.e. all of them are bridges.
Let $G'$ be the graph obtained from $G$ by contracting all the connected components of $G\cap H$ into a single vertex. And there is an edge $S_1S_2$ in $G'$ if there is an edge $u_1u_2$ of $G$ with $u_1$ in $S_1$ and $u_2$ in $S_2$.  We say that $v_1v_2$ \emph{corresponds to} $S_1S_2$ in $G'$. Note that by Claim~\ref{f1} there is a bijection between the edges of $G-H$ and the edges of $G'$.
Since no edge of $G-H$ is contained in a cycle of $G$, the graph $G'$ is a tree (without multiedges).

\begin{figure}
 \centering
 \includegraphics[scale=0.5]{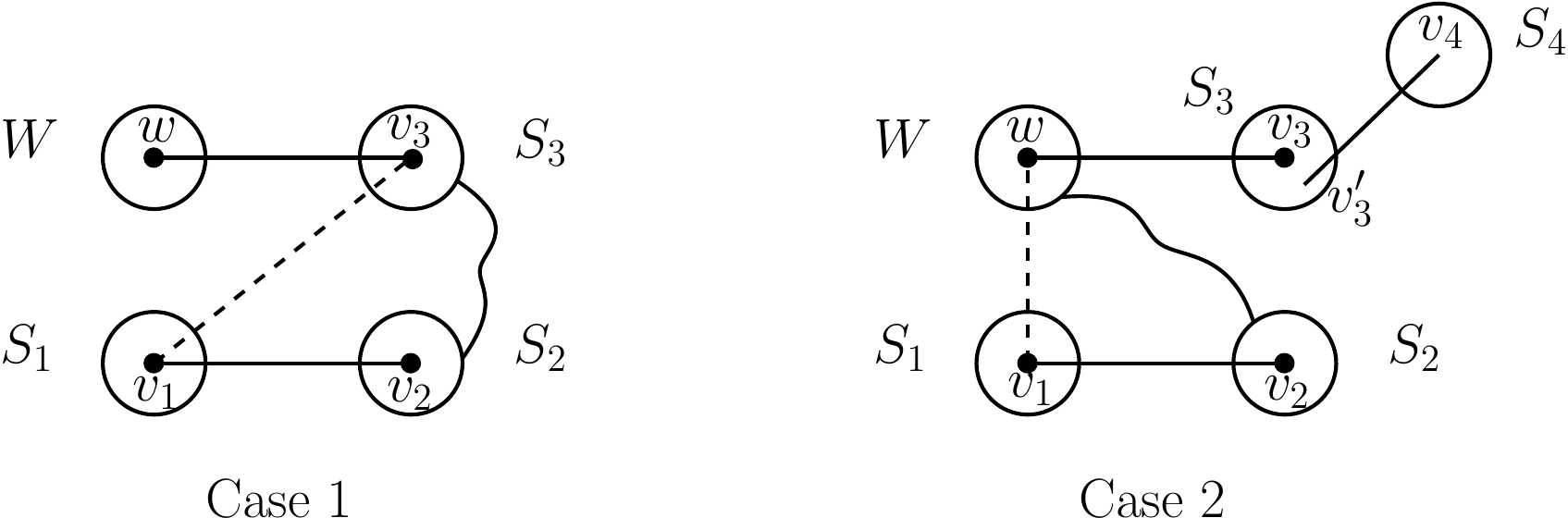}
 \caption{Illustration of the proof of Lemma~\ref{lem:4approxmain}. Full-edges are edges of $G -H$, dotted edges are edges of $H - G$ and zigzags represent paths in $G$}
 \label{fig:4approx}
\end{figure}

Let $S_1S_2$ be an edge of $G'$ such that $S_1$ is a leaf of $G'$ and let $v_1v_2$ be the corresponding edge in $G-H$ with $v_1\in S_1$ and $v_2\in S_2$. Now let $v_1v_3$ be the edge of $H-G$ incident to $v_1$ and let $S_3$ be the vertex of $G'$ containing $v_3$ (since an edge exists since $G$ and $H$ have the same degree sequence). 
Since $G$ and $H$ have the same degree sequence, there exists an edge $v_3w$ in $G-H$. Let $W$ be the set of $G\cap H$ containing $w$. An illustration of the proof is proposed in Figure~\ref{fig:4approx}.\smallskip

\noindent\textbf{Case 1.} $W$ is not in the path between $S_1$ and $S_3$ in $G'$. \\
The flip $(v_1v_2,v_3w) \rightarrow (v_1v_3,v_2w)$  does not disconnect $G$. Moreover, the symmetric difference decreases by~$2$.
\smallskip

\noindent\textbf{Case 2.} $W$ is in the path between $S_1$ and $S_3$ in $G'$. \\
In this case we cannot simply perform the flip  $(v_1v_2,v_3w) \rightarrow (v_1v_3,v_2w)$ since it would disconnect the graph $G$. First note that $S_1 \cup S_3 \ne V(G')$ (since otherwise both $S_1$ and $S_3$ would have degree one).
We claim that $S_3$ is not a leaf of $G'$. Assume by contradiction that $S_1$ and $S_3$ are leaves of $G'$ and $v_1v_3 \in E(G)$. Then the only one edge of $G$ with one endpoint in $S_1$ and one endpoint in $V \setminus S_1$ has an endpoint in $S_3$. Similarly, the one edge of $G$ with one endpoint in $S_3$ and one endpoint in $V \setminus S_3$ has an endpoint in $S_1$. Since $S_1 \cup S_3 \ne V$, the graph $G$ is not connected, a contradiction.

So $S_3$ has degree at least $2$ in $G'$. Let $v'_3v_4$ be an edge of $G-H$ distinct from $v_3w$ with one endpoint $v_3'$ in $S_3$ and one endpoint $v_4$ in $V \setminus S_3$ (note that $v_3'$ might be $v_3$). Let $S_3S_4$ be be the corresponding edge in $G'$ with $v'_3\in S_3$ and $v_4\in S_4$.
We first perform the flip $(v_1v_2,v'_3v_4) \rightarrow (v_1v'_3,v_2v_4)$ in $G$. Notice that this flip does not disconnect the graph since $S_4$ is not on the pat between $S_1$ and $S_2$. Since both edges $v_1v_2$ and $v'_3v_4$ are in $G-H$, the symmetric difference does not increase and no edge of $G \cap H$ is flipped. Note moreover that if $v_3 =v_3'$ then the symmetric difference has decreased and we are done. So we can assume that $v_3 \ne v_3'$.
We then perform the flip $(v_1v'_3,v_3w) \rightarrow (v_1v_3,v'_3w)$. This flip does not disconnect the graph (since $S_3$ induces a connected subgraph) and decreases by two the symmetric difference (since it creates $v_1v_3$), which completes the proof.
\end{proof}

\section{Tree of the fragments}\label{sec:rooted_tree}

\begin{figure}[h]
\fbox{
\begin{subfigure}{0.3\textwidth}
        \centering
        \includegraphics[scale=0.6]{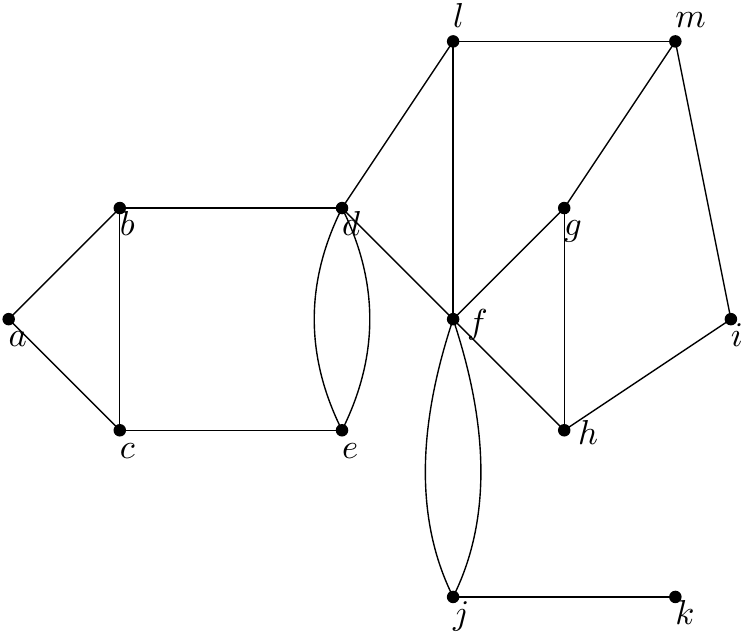}
        \caption{A graph $G$}
        \label{fig:sfig2}
    \end{subfigure}
}
\fbox{
\begin{subfigure}{.33\textwidth}
        \centering
        \includegraphics[scale=0.6]{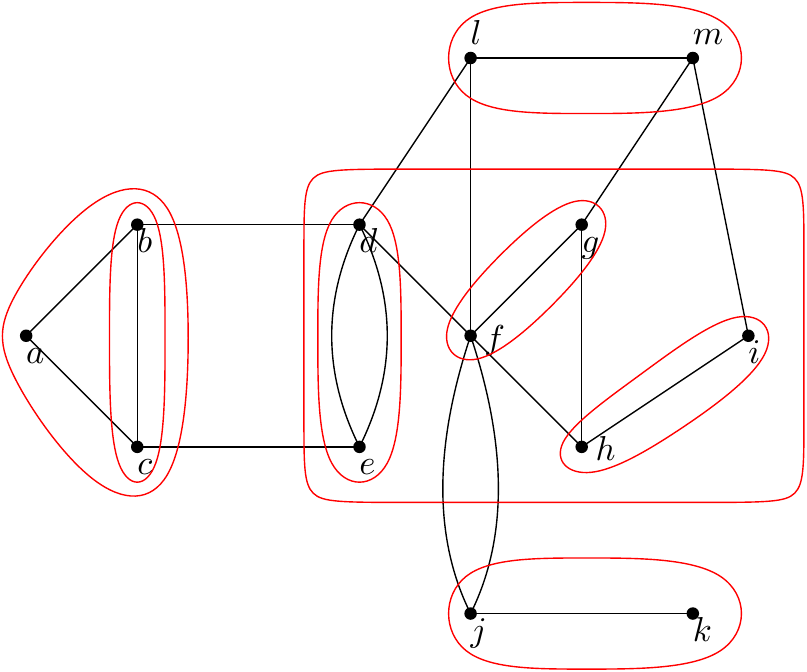}
        \caption{$G$ and the sets of the collection $\mathcal{CC}$}
        \label{fig:sfig4}
    \end{subfigure}}
     \fbox{
   \begin{subfigure}{.32\textwidth}
        \centering
        \includegraphics[scale=0.6]{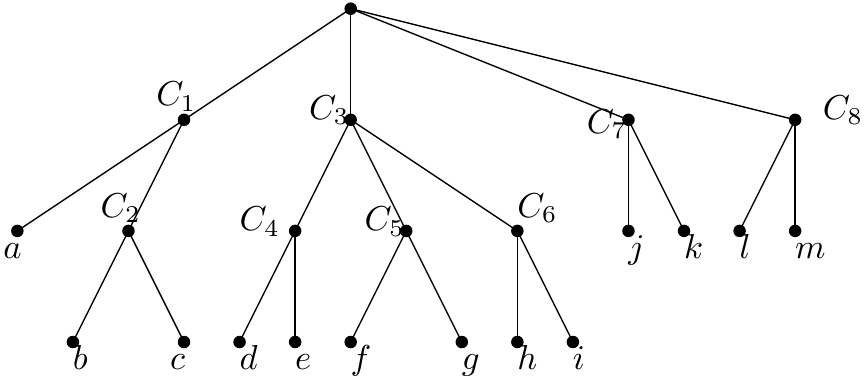}
        \caption{The tree of the fragments}
        \label{fig:sfig3}
    \end{subfigure}}
\caption{Example of a graph $G$ with a nested partition}
\label{fig:exemple1}
\end{figure}

Let $V$ be a vertex set and $S$ be a degree sequence of size $|V|$. Let $\mathcal{CC}$ be a nested collection of subsets of $V$ such that $V$ and all the singletons belong to $\mathcal{CC}$. Singletons are included in $\mathcal{CC}$ for convenience since their addition does not change the graph $\cG(S,\mathcal{CC})$. Indeed, a single vertex induces a connected subgraph.

Let $G$ and $H$ be two graphs in $\cG(S,\mathcal{CC})$. Let $e=uv$ be an edge with multiplicity $r_1$ in $G$ and $r_2$ in $H$ and let $r=\min(r_1,r_2)$. Then $r$ copies of $e$ are \emph{good} and the others are bad. (In case of simple graph an edge $e$ of $G$ is good if it is also an edge of $H$ and it is bad otherwise).
A flip in $G$ is \emph{correct} if it maintains the connectivity of $G[C]$ for any $C \in \mathcal{CC}$. A flip \emph{does not modify a good edge $e$} if the multiplicity of $e$ after the flip is still at least the multiplicity of $e$ in $G \cap H$.

Let $\mathcal{CC}$ be a nested collection. The \emph{tree of the fragments} $T$ is the tree whose nodes are labeled by elements of $\mathcal{CC}$ and there is an arc from $C_1$ to $C_2$ if $C_2 \subseteq C_1$ and there does not exist $C \in \mathcal{CC}$ distinct from $C_1$ and $C_2$ such that $C_2 \subseteq C \subseteq C_1$. In other words, $T$ is the tree rooted at $V$ corresponding to the nested partition of $\mathcal{CC}$ (see Figure~\ref{fig:exemple1} for an illustration).
By abuse of notation and when no confusion is possible, $C$ will denote both the node of $T$ and the corresponding set in $\mathcal{CC}$. Since $\mathcal{CC}$ contains all the sets of size one, the leaves of $T$ are the vertices of $V$. Since $\mathcal{CC}$ is nested, $T$ is well-defined and is a tree.
Given a node $C$ of the tree of the fragments and $v \in V$, $v \in C$ if and only if the leaf labeled with $v$ is a leaf of the subtree rooted at $C$.
We denote by $G[C]$ the subgraph of $G$ induced by the vertices in $C$.

\begin{figure}[h]
        \centering
        \includegraphics[scale=1]{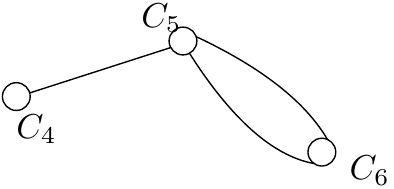}
\caption{The graph \contracted{G}{C_3}}
        \label{fig:sfig5}
\end{figure}

Given a node $C$, we denote by $ch(C)$ the children of $C$ in $T$.
Let $C$ be an internal node of $T$ and $G \in \cG(S,\mathcal{CC})$. We denote by $\contracted{G}{C}$, the graph with vertex set $ch(C)$ where $C'$ and $C''$ in $V(\contracted{G}{C})$ are adjacent with multiplicity $k$ if there exist exactly $k$ edges with one endpoint in $C'$ and one endpoint in $C''$ in $G$.
See Figure~\ref{fig:sfig5} for an illustration.

Note that there is a natural bijection between edges of $G$ and edges of $\contracted{G}{C}$.
Indeed, for any edge $e$ of $G$, there exists a unique $C \in \mathcal{CC}$ in which an edge is created in $\contracted{G}{C}$ because of $e$. The following remark follows from that observation.

\begin{rem}\label{rem:SafeOutside}
Let $G$ be a graph in $\cG(S,\mathcal{CC})$.
\begin{enumerate}
 \item For every $C \in T$, $|E(G[C])| = \bigcup_{C' \subseteq C} |E(\contracted{G}{C'})|$;
 \item Let $C\in T$ and let $e\in E(\contracted{G}{C})$, then $\contracted{G}{C'}-e$ is connected for every $C'\in T$, $C'\neq C$.
\end{enumerate}
\end{rem}

The second point holds since $e$ has no impact on $\contracted{G}{C'}$ for $C' \neq C$.
Let us first prove the following lemma that will be used all along the proof.

\begin{lemma}\label{lem:ContractEquiv}
  Let $G$ be a  graph in $\cG(S)$.
  \begin{itemize}
   \item If $G[C]$ is connected, then $\contracted{G}{C}$ is connected.
   \item Let $C$ be a node of $T$ and $T'$ be the subtree rooted at $C$. If $\contracted{G}{C'}$ is connected for every node $C'$ of $T'$ then $G[C']$ is connected for every node $C'$ of $T'$.
  \end{itemize}
\end{lemma}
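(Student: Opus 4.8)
The plan is to prove the two items separately: the first by a direct quotient (contraction) argument, and the second by a bottom-up induction on the subtree $T'$, in each case tracking connectivity through paths/walks.

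For the first item, I would begin by recording the structural fact that, since $\mathcal{CC}$ is nested and contains all singletons, the children $ch(C)$ partition $C$ into nonempty blocks. This gives a surjection $\pi$ from the vertex set of $G[C]$ onto the vertex set $ch(C)$ of $\contracted{G}{C}$, sending each $v\in C$ to the unique child containing it. By the very definition of $\contracted{G}{C}$, every edge $xy$ of $G[C]$ with $\pi(x)\neq\pi(y)$ corresponds to an edge $\pi(x)\pi(y)$ of $\contracted{G}{C}$. Hence any path of $G[C]$, after deleting the consecutive repetitions produced by edges lying inside a single child, maps to a walk in $\contracted{G}{C}$. If $G[C]$ is connected, then for any two children $C',C''$ there is a path in $G[C]$ joining a vertex of $C'$ to a vertex of $C''$, whose image is a walk from $C'$ to $C''$ in $\contracted{G}{C}$; therefore $\contracted{G}{C}$ is connected.

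For the second item, I would induct from the leaves of $T'$ toward its root $C$. The base case is a leaf of $T'$: every leaf of $T$ is a singleton, so $G[C']$ is a single vertex and is trivially connected. For the inductive step, let $C'$ be an internal node of $T'$ and assume $G[C'']$ is connected for every child $C''$ of $C'$; these children again belong to $T'$, since $T'$ is the \emph{full} subtree rooted at $C$. By hypothesis $\contracted{G}{C'}$ is connected. I would then reconstruct, for arbitrary $u,v\in C'$, a path in $G[C']$ as follows: let $C''_u,C''_v$ be the children containing $u,v$, and take a walk $C''_u=D_0,D_1,\ldots,D_k=C''_v$ in $\contracted{G}{C'}$. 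Each adjacency $D_iD_{i+1}$ witnesses an edge $x_iy_{i+1}$ of $G$ with $x_i\in D_i$ and $y_{i+1}\in D_{i+1}$. Using the connectivity of each $G[D_i]$ (the induction hypothesis), I route inside $D_0$ from $u$ to $x_0$, inside each $D_i$ ($1\le i\le k-1$) from $y_i$ to $x_i$, and inside $D_k$ from $y_k$ to $v$, concatenating these intra-block paths with the crossing edges $x_iy_{i+1}$. This yields a walk in $G[C']$ from $u$ to $v$, so $G[C']$ is connected, completing the induction.

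The main (indeed essentially the only) obstacle is the bookkeeping in the second part: one must check that the children of every node of $T'$ again lie in $T'$, so that the induction hypothesis is available, and that the routing is legitimate, i.e. that both the crossing edges $x_iy_{i+1}$ and the paths inside each $G[D_i]$ all live within $G[C']$. Both facts are immediate from the definitions — $T'$ is a full subtree, and each $D_i\subseteq C'$ — so once the setup is in place the argument is routine.
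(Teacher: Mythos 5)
Your proof is correct and follows essentially the same route as the paper: the first item via the observation that contraction (vertex identification) preserves connectivity, and the second via a bottom-up induction that lifts a walk in $\contracted{G}{C'}$ to a walk in $G[C']$ using the connectivity of the children. Your write-up is in fact more careful than the paper's, which leaves the path-reconstruction step implicit (and contains a typo, concluding that $\contracted{G}{C'}$ rather than $G[C']$ is connected).
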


\begin{proof}
If $G[C]$ is connected, then $\contracted{G}{C}$ is connected since $\contracted{G}{C}$ is obtained from $G[C]$ by identifying vertices. So the first point holds.

Assume now that $\contracted{G}{C'}$ is connected for every $C'\in T'$. Let us prove by induction bottom-up from the leaves that the right part of the second point holds. A leaf indeed induces a connected subgraph. Now let $C'$ be an internal node. By induction, for any child $C''$ of $C'$ in $T'$, the graph $G[C'']$ is connected. Moreover by hypothesis the graph  $\contracted{G}{C'}$ is connected. Since vertices of $\contracted{G}{C'}$ correspond to the identification of $G[C'']$ (which is connected) for every child $C''$ of $C'$, the graph  $\contracted{G}{C'}$ is connected.
\end{proof}

The second point will be applied in several situations since manipulating $\contracted{G}{C}$ is often simpler than manipulating $G[C]$.
Note that when $C$ is the root, the conclusion of the second point ensures that $G$ is in $\cG(S,\mathcal{CC})$.


Let $V_1,V_2$ be a partition of $V$. Then $E_G(V_1,V_2)$ denotes the set of edges of $G$ with one endpoint in $V_1$ and one endpoint in $V_2$.

\begin{lemma}\label{lem:GoodBadEdge}
     Let $G,H$ be two graphs of $\cG(S,\mathcal{CC})$ and let $C \in \mathcal{CC}$.
     If $|E_{G}(C,V \setminus C)| < |E_{H}(C,V \setminus C)|$, then there exists $e \in G[C] - H[C]$ such that $G[C'] -e$ is connected for every node $C'$ of $T$.
\end{lemma}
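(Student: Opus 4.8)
The plan is to reduce the statement, via a degree count, to a purely local question about a single contracted graph, and then to settle that local question with a count on the $2$-edge-connected components.

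First I would observe that the hypothesis forces $G[C]$ to contain strictly more edges than $H[C]$. Indeed, since $G$ and $H$ realize the same sequence $S$, the quantity $\sum_{v\in C} d(v)$ is the same in both graphs; writing it as $2|E(G[C])|+|E_G(C,V\setminus C)|$ (and likewise for $H$) and using $|E_G(C,V\setminus C)| < |E_H(C,V\setminus C)|$ yields $|E(G[C])| > |E(H[C])|$. Next, by the first point of Remark~\ref{rem:SafeOutside} we have $|E(G[C])| = \sum_{C'\subseteq C}|E(\contracted{G}{C'})|$ and similarly for $H$, so there is at least one node $C'\subseteq C$ of $T$ with $|E(\contracted{G}{C'})| > |E(\contracted{H}{C'})|$. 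By the first point of Lemma~\ref{lem:ContractEquiv} both $\contracted{G}{C'}$ and $\contracted{H}{C'}$ are connected, and they share the vertex set $ch(C')$.

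The heart of the argument is then the following claim, which I expect to be the main obstacle: if $G'$ and $H'$ are connected multigraphs on the same vertex set with $|E(G')| > |E(H')|$, then there is an edge $e\in G'-H'$ such that $G'-e$ is still connected. To prove it I would argue by contradiction, assuming every excess copy of $G'$ is a bridge. A bridge has multiplicity one, so every non-bridge edge $e$ of $G'$ must satisfy $\mathrm{mult}_{G'}(e)\le \mathrm{mult}_{H'}(e)$. Summing over the edges internal to the $2$-edge-connected components $A_1,\dots,A_c$ of $G'$ (these are precisely the non-bridge edges, the bridges being exactly the $c-1$ inter-component edges) shows that $H'$ contains at least $|E(G')|-(c-1)$ edges lying inside the components. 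Since $H'$ is connected, contracting each $A_i$ shows it must also have at least $c-1$ edges joining distinct components; adding these two disjoint contributions gives $|E(H')|\ge |E(G')|$, a contradiction. Applied to $G'=\contracted{G}{C'}$ and $H'=\contracted{H}{C'}$, this produces an edge $\bar e$ of $\contracted{G}{C'}-\contracted{H}{C'}$ with $\contracted{G}{C'}-\bar e$ connected.

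It remains to pull $\bar e$ back to an actual edge of $G$ and to upgrade connectivity from the contracted graphs to the graphs $G[C'']$. The edge $\bar e$ joins two children $C_a,C_b$ of $C'$ between which $G$ has strictly more edges than $H$, so some pair $x\in C_a$, $y\in C_b$ satisfies $\mathrm{mult}_G(xy) > \mathrm{mult}_H(xy)$; since $C_a,C_b\subseteq C'\subseteq C$, this edge $e=xy$ lies in $G[C]-H[C]$. Removing $e$ decreases the $\bar e$-multiplicity of $\contracted{G}{C'}$ by one, so $\contracted{(G-e)}{C'}$ is connected, while for every other node $C''$ the edge $e$ does not appear in $\contracted{G}{C''}$, and the second point of Remark~\ref{rem:SafeOutside} leaves it connected. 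Thus $\contracted{(G-e)}{C''}$ is connected for every node $C''$ of $T$, and applying the second part of Lemma~\ref{lem:ContractEquiv} at the root gives that $(G-e)[C''] = G[C'']-e$ is connected for every node $C''$, which is exactly the desired conclusion.
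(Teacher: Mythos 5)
Your proof is correct, and while its outer skeleton coincides with the paper's (the degree count giving $|E(G[C])| > |E(H[C])|$, the localization via Remark~\ref{rem:SafeOutside} to a single node $C'$ with $|E(\contracted{G}{C'})| > |E(\contracted{H}{C'})|$, and the pull-back combined with the second points of Remark~\ref{rem:SafeOutside} and Lemma~\ref{lem:ContractEquiv}), the key combinatorial step is settled by a genuinely different decomposition. The paper looks at the \emph{blocks}, i.e.\ the connected components of $\contracted{G}{C'} \cap \contracted{H}{C'}$: with $b$ blocks, connectivity of $\contracted{H}{C'}$ forces at least $b-1$ edges in $\contracted{H}{C'} - \contracted{G}{C'}$, hence at least $b$ edges in $\contracted{G}{C'} - \contracted{H}{C'}$, and $b$ inter-block edges on $b$ blocks cannot form a forest, so one of them is internal to a block or lies on a cycle between blocks and is therefore deletable. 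You instead argue by contradiction on the $2$-edge-connected components of $\contracted{G}{C'}$, noting that a multiple edge is never a bridge, counting the $c-1$ bridges, and using connectivity of $\contracted{H}{C'}$ to recover $c-1$ inter-component edges on top of the internal ones. Both are short counting arguments of the same flavor; yours has the advantage of isolating a clean, reusable claim (two connected multigraphs on the same vertex set with $|E(G')|>|E(H')|$ admit a non-disconnecting edge in $G'-H'$) that makes no reference to the intersection graph, whereas the paper's block decomposition more directly exhibits where the excess of the symmetric difference sits. Either route yields the lemma.
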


Note that when we say ``\textit{there exists $e \in G[C] - H[C]$}'' in the statement of Lemma~\ref{lem:GoodBadEdge}, the edge $e$ might exist in both $G$ and $H$, but in that case the multiplicity of $e$ in $H$ has to be strictly smaller than the multiplicity of $e$ in $G$.

\begin{proof}
Since $G$ and $H$ have the same degree sequence, we have $\sum\limits_{x \in C} deg_{G}(x)=\sum\limits_{x \in C} deg_{H}(x)$. Since $|E_{G}(C,V \setminus C)| < |E_{H}(C,V \setminus C)|$, we have $|E(G[C])|> |E(H[C])|$.
By Remark~\ref{rem:SafeOutside},
\[
    |E(G[C])|= \sum_{C' \subseteq C, \ C' \in \mathcal{CC}} |E(\contracted{G}{C'})|> \sum_{C' \subseteq C, \ C' \in \mathcal{CC}} |E(\contracted{H}{C'})| = |E(H[C])|
\]

So there exists $C' \subseteq C$ with $C' \in \mathcal{CC}$ such that $|E(\contracted{G}{C'})|> |E(\contracted{H}{C'})|$.

A \emph{block} is a a connected component of $\contracted{G}{C'} \cap \contracted{H}{C'}$. Let $b$ be the number of blocks. Since $\contracted{H}{C'}$  is connected (since $C' \in \mathcal{CC}$ and by Lemma~\ref{lem:ContractEquiv}), there are at least $b-1$ edges of $\contracted{H}{C'}$ which are not in $\contracted{G}{C'} \cap \contracted{H}{C'}$. Since by assumption $\contracted{G}{C'}$ contains more edges than $\contracted{H}{C'}$, the number of edges in $\contracted{G}{C'} - \contracted{H}{C'}$ is at least $b$. So then there is an edge $e_1$ of $\contracted{G}{C'} - \contracted{H}{C'}$ with both endpoints in the same block or there is a cycle of edges of $\contracted{G}{C'} - \contracted{H}{C'}$ between the blocks. The deletion of $e_1$ in the first case or any edge of the cycle in the second case leaves the graph $\contracted{G}{C'}$ connected. Moreover, for every $C''$ in $T$ distinct from $C'$, the graph $\contracted{G}{C''}$ is not modified by Remark~\ref{rem:SafeOutside}, and then is still connected.
So the second point of Lemma~\ref{lem:ContractEquiv} ensures that the conclusion holds.
\end{proof}

\section{Realizability}\label{sec:realizability}

Let $V=\{v_1,\ldots,v_n\}$ be a set of vertices and let $\mathcal{CC}$ be a nested partition containing $V$ and all the singletons. Let $T$ be the tree of the fragments of $\mathcal{CC}$. In this section, we provide a necessary and sufficient condition on the degree sequence $S=\{d_1,\ldots,d_n \}$ to be realizable by a loop-free multigraph such that $C$ induces a connected subgraph for every $C \in \mathcal{CC}$. This characterization generalizes the ones of~\cite{HakiI} and \cite{Senior51} for loop-free multigraphs since in this case $\mathcal{CC} = \{ V \}$. We end the section by explaining how this characterization can be turned into a polynomial time algorithm.

Let $C$ be a node of $T$. A graph $G'$ is \emph{coherent on $C$} if it is defined on $C$ and:
\begin{itemize}
 \item for every $v_i \in C$, $d_{G'}(v_i) \le d_i$ and,
 \item for every $C' \in \mathcal{CC}$ such that $C' \subseteq C$, $G'[C']$ is connected.
\end{itemize}
Let $G'$ be a graph \emph{coherent} on $C$. The \emph{degree-deficit} of $C$ (for $G'$) is equal to $\sum_{v_i \in C} (d_i - d_{G'}(v_i))$. In other words, the degree-deficit represents the amount of endpoints of edges ``missing'' to complete the degree of the vertices of $C$. Note that if a graph $G$ defined on $V$ realizes $T$, then the degree-deficit of $C$ is the number of edges with one endpoint in $C$ and one endpoint in $V \setminus C$. Moreover for any node $C$, the graph $G[C]$ is coherent on $C$. Let us start with a straightforward remark.

\begin{rem}\label{rem:u}
Let $C'$ and $C''$ be two disjoint sets in  $\mathcal{CC}$. Let $G'$ be a graph defined on $C' \cup C''$. If the degree-deficits of both $C'$ and $C''$ are positive, then there exists $u \in C'$ and $v \in C''$ such that the edge $uv$ can be added in $G'$ without violating any degree constraint.
\end{rem}

We now define $\ell(C)$ and $u(C)$. We will then prove that, given a graph $G$ in $\cG(S,\mathcal{CC})$, they correspond to respectively the minimum and the maximum degree-deficit of $C$ for any graph $G'$ coherent on $C$.
For any node $C$ in $\mathcal{CC}$, we define

\[u(C):= \sum \limits_{v_i\in C }d_i - (2|C|-2)\]

\begin{equation*}
\text{and} \ \ \ \ \ \
\ell(C) = \begin{cases}
d_i & \text{if } C \text{ is the leaf } v_i\\
\varphi (C) & \text{if } \varphi(C)\geq 0\\
0 & \text{if } \varphi(C)<0\text{ and }\varphi(C) \text{ is even}\\
1 & \text{if } \varphi(C)<0\text{ and }\varphi(C) \text{ is odd}
\end{cases}
\end{equation*}
\[\text{where } \ \ \ \ \ \varphi(C)= \max\limits_{C_j \in ch(C)} \Big( \ell(C_j) - \sum_{\substack{C_i \in ch(C)\\
C_i\neq argmax\{\ell(C_j)\}}}u(C_i) \Big).\]

\begin{lemma}\label{lem:necessary_cond}
Let $S$ be a degree sequence and $\mathcal{CC}$ be a nested partition. Let $G$ in $\cG(S,\mathcal{CC})$. Then for every node $C$ of the tree of the fragments, the degree-deficit $s(C)$ of $C$ satisfies:
 \[ \ell(C) \le s(C) \le u(C) \]
 Moreover, $\ell(C)$ and $u(C)$ are even if and only if $\sum_{v_i \in C} d_i$ is even.
\end{lemma}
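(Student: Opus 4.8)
The plan is to prove both inequalities and the parity statement by induction on the tree of fragments, working bottom-up from the leaves toward the root. The base case is immediate: for a leaf $v_i$ we have $\ell(v_i)=u(v_i)=d_i=s(v_i)$, since the degree-deficit of a singleton is exactly $d_i$ (no internal edges exist), and the parity of $d_i$ trivially matches $\sum_{v\in\{v_i\}} d_i = d_i$. For the inductive step I fix an internal node $C$ with children $C_1,\dots,C_k$ and assume the bounds and parity claim hold for each child.

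First I would establish the upper bound $s(C)\le u(C)$. The key accounting identity is that $G[C]$ is a connected graph on $|C|$ vertices (since $C\in\mathcal{CC}$), so it has at least $|C|-1$ edges. Summing degrees inside $C$ gives $\sum_{v_i\in C} d_{G}(v_i) = 2|E(G[C])| + s(C)$, because each internal edge contributes $2$ to the internal degree sum and each of the $s(C)$ edges leaving $C$ contributes $1$. Since $\sum_{v_i\in C} d_{G}(v_i)=\sum_{v_i\in C} d_i$ (the degrees are exactly $d_i$ as $G$ realizes $S$), connectivity forces $2|E(G[C])|\ge 2(|C|-1)$, hence $s(C) = \sum_{v_i\in C} d_i - 2|E(G[C])| \le \sum_{v_i\in C} d_i - (2|C|-2) = u(C)$. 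This is clean and does not even need the inductive hypothesis.

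The harder and more delicate part is the lower bound $s(C)\ge\ell(C)$, where the recursive formula $\varphi(C)$ enters. The intuition is that $s(C)$ counts edges leaving $C$; these must at minimum supply the connectivity demands of the children. Consider the child $C_j$ achieving the maximum in $\varphi(C)$. Each edge internal to $C$ joining two children can ``absorb'' at most $u(C_i)$ endpoints from a child $C_i$ (the maximum internal-facing degree-deficit a child can present), so the number of edge-endpoints of $C_j$ that must escape $C$ entirely is at least $\ell(C_j)$ minus what the other children can collectively absorb, namely $\sum_{C_i\ne C_j} u(C_i)$. This yields $s(C)\ge\varphi(C)$ whenever $\varphi(C)\ge0$. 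When $\varphi(C)<0$ the bound $s(C)\ge0$ is trivial, and the refinement to $\ell(C)=1$ in the odd case comes from the parity constraint. I expect the main obstacle to be making the absorption argument rigorous: one must carefully justify that no child can send more than $u(C_i)$ endpoints inward using the inductive upper bound on $C_i$, and handle the subtlety that $C_j$'s own escaping demand interacts with the global flow of edges among all children, presumably invoking Remark~\ref{rem:u} to argue that any slack can be consumed by adding admissible internal edges.

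Finally I would settle the parity statement. Since $\sum_{v_i\in C} d_i = 2|E(G[C])| + s(C)$, the quantity $s(C)$ has the same parity as $\sum_{v_i\in C} d_i$ for every graph realizing the sequence; as $s(C)$ ranges over an interval with fixed parity, both endpoints $\ell(C)$ and $u(C)$ of that interval must share this parity. For $u(C)$ this is visible directly from its formula, since $2|C|-2$ is even. For $\ell(C)$ the case split in the definition (returning $0$ versus $1$ according to whether $\varphi(C)$ is even or odd when $\varphi(C)<0$) is precisely engineered to enforce the correct parity, so I would verify that each branch produces a value congruent to $\sum_{v_i\in C} d_i \pmod 2$, closing the induction.
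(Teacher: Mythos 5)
Your proposal is correct and follows essentially the same route as the paper: the same degree-sum identity for the upper bound, the same bottom-up induction with the ``absorption'' bound $s(C)\ge \ell(C_j)-\sum_{i\ne j}u(C_i)$ for the child attaining the maximum in $\varphi(C)$, and the same parity bookkeeping (the step you flag as delicate is exactly the paper's one-line observation that the edges from $C_j$ to its siblings number at most $\sum_{i\ne j}s(C_i)\le\sum_{i\ne j}u(C_i)$, so no extra machinery such as Remark~\ref{rem:u} is needed).
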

\begin{proof}
The number of edges of a connected subgraph on $n$ vertices is at least $n-1$. Since $G[C]$ is connected, there are at least $|C|-1$ edges with both endpoints in $C$ and then we have $s(C) \le u(C) = \sum \limits_{v_i\in C }d_i - (2|C|-2)$. Since we removed an even value from $\sum \limits_{v_i\in C }d_i$, $u(C)$ is even if and only if $\sum \limits_{v_i\in C }d_i $ is even.

Let us now prove that $s(C) \ge \ell(C)$. We prove it by induction bottom-up from the leaves. If $C$ is a leaf, then $C=\{v_i\}$ and there are exactly $d_i$ edges between $v_i$ and its complement in $G$. Thus $s(C)=E(C,V\setminus C) = \ell(C)$ and the parity of $s(C)$ is indeed the one of $\ell(C)$.

Now, let $C$ be an internal node. By induction hypothesis, for every child $C_i$ of $C$, the parities of $\ell(C_i)$ and $u(C_i)$ are the parity of $\sum \limits_{v_i\in C_i }d_i $. Thus, by definition of $\ell(C)$, the parity of $\ell(C)$ is the parity of $\sum \limits_{v_i\in C }d_i $.
So, in particular, if  $\varphi(C) \le 1$, the conclusion holds since the degree-deficit cannot be negative.
So we can assume that $\varphi(C) \ge 2$. Let us denote by $C_1,\ldots,C_r$ the children of $C$ and we can assume w.l.o.g. that $C_1$ is the child of $C$ satisfying $\varphi(C)= \ell(C_1) - \sum_{i \ge 2} u(C_i)$. Let $N:=\sum_{i \ge 2} u(C_i)$. The first part of the proof ensures that the degree-deficit of $\cup_{i \ge 2} C_i$ is at most $N$. Moreover, by induction, the degree-deficit of $C_1$ is at least $\ell(C_1)$. So the maximum number of edges between $C_1$ and $\cup_{i \ge 2} C_i$ in a graph satisfying all the constraints is $N$ (since $\ell(C_1) > N$). So the degree-deficit of $C$ is at least $\ell(C_1)-N$ which completes the proof.
\end{proof}





The goal of this section consists in proving the following theorem that ensures that it suffices to look at the values $u(C)$ and $\ell(V)$ to determine if a graph is realizable.

\begin{theorem}\label{thm:rea}
Let $S$ be a degree sequence and $\mathcal{CC}$ be a nested partition containing $V$. There exists a graph in $\cG(S,\mathcal{CC})$ if and only if:
 \begin{enumerate}
  \item For every internal node $C$ of the tree of the fragments, $u(C) \ge 1$.
  \item $\ell(V)=0$ and $u(V) \ge 0$.
 \end{enumerate}
\end{theorem}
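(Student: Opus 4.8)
This is almost immediate from Lemma~\ref{lem:necessary_cond}. Given $G\in\cG(S,\mathcal{CC})$, let $s(C)$ be the degree-deficit of $C$, which equals $|E_G(C,V\setminus C)|$. For the root no edge leaves $V$, so $s(V)=0$; Lemma~\ref{lem:necessary_cond} gives $\ell(V)\le 0\le u(V)$, and since $\ell(V)\ge 0$ by definition this forces $\ell(V)=0$ and $u(V)\ge 0$, i.e. condition $2$. For any internal node $C\neq V$ the set $C$ is a proper nonempty subset of $V$ and $G$ is connected, so at least one edge joins $C$ to $V\setminus C$; hence $s(C)\ge 1$ and Lemma~\ref{lem:necessary_cond} yields $u(C)\ge s(C)\ge 1$, which is condition $1$. (The root is the only internal node not covered by condition $1$, and it is governed by condition $2$.)

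\textbf{Sufficiency.} Assuming conditions $1$ and $2$, the plan is to prove the following strengthening by bottom-up induction on $T$: for every node $C$ and every integer $d$ with $\ell(C)\le d\le u(C)$ and $d\equiv \sum_{v_i\in C}d_i \pmod 2$, there is a graph coherent on $C$ of degree-deficit exactly $d$; moreover $\ell(C)\le u(C)$. Applying this at the root with $d=0$ — legitimate since condition $2$ gives $\ell(V)=0\le u(V)$, and the parity clause of Lemma~\ref{lem:necessary_cond} forces $\sum_{v_i\in V}d_i$ even because $\ell(V)=0$ is even — produces a graph on $V$ meeting every degree exactly and inducing a connected subgraph on each $C\in\mathcal{CC}$, which is precisely a graph of $\cG(S,\mathcal{CC})$.

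The base case $C=\{v_i\}$ is trivial: the only coherent graph is edgeless and $\ell(C)=u(C)=d_i$. For an internal node $C$ with children $C_1,\dots,C_r$ I would realize each child independently by induction and then glue them together with edges between distinct children, which never create loops and, by Remark~\ref{rem:u}, can be inserted one at a time as long as the two incident children still have positive deficit; each such inter-child edge lowers the total deficit by exactly two, and by Lemma~\ref{lem:ContractEquiv} it suffices to arrange these edges so that \contracted{G}{C} is connected in order to make $G[C]$ connected. The two extremes are reached as follows: for $d=u(C)$, build every child at its maximal deficit $u(C_j)$ and connect them by a spanning tree of $r-1$ edges, leaving deficit $\sum_j u(C_j)-2(r-1)=u(C)$; for $d=\ell(C)$, take the child $C_1$ attaining the maximum in the definition of $\varphi(C)$, build it at its minimal deficit $\ell(C_1)$ and the others at their maximal deficits $u(C_i)$, and absorb as many exposed half-edges as possible, leaving exactly $\varphi(C)$ when $\varphi(C)\ge 0$ and $0$ or $1$ (according to parity) otherwise, matching $\ell(C)$. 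Every intermediate $d$ of the correct parity is then obtained by interpolation: since each elementary modification — deleting an inter-child edge lying on a cycle of \contracted{G}{C} (so connectivity is preserved), or changing one child's deficit by two via induction — alters $s(C)$ by exactly two, and all attainable deficits share the parity of $\sum_{v_i\in C}d_i$, the whole admissible interval is covered.

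I expect the crux to be twofold. First, one must certify that the gluing can always be carried out while keeping \contracted{G}{C} connected: this is exactly where condition $1$ enters, since it guarantees $u(C_j)\ge 1$ for every internal child (and $u(C_j)=d_j\ge 1$ for every leaf child, isolated vertices being excluded by connectivity), so each child exposes at least one half-edge and can be attached to the rest; a short degree-sequence argument in the spirit of Hakimi and Senior on the contracted multigraph \contracted{G}{C} is needed to guarantee that a connected multigraph with the prescribed child-degrees exists. Second, and more delicate, is verifying that the recursively defined $\ell(C)$ really is the minimum attainable deficit, including its parity corrections: one must show both that the bottleneck child $C_j$ cannot shed more than $\sum_{i\neq j}u(C_i)$ edges internally (so no configuration beats $\varphi(C)$) and that this bound is tight. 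The same computation $u(C)-\varphi(C)=\big(u(C_1)-\ell(C_1)\big)+2\sum_{i\ge 2}u(C_i)-2(r-1)\ge 0$, valid because $u(C_j)\ge 1$ for all children and $u(C_1)\ge \ell(C_1)$ by induction, simultaneously establishes $\ell(C)\le u(C)$ and thus keeps the induction running.
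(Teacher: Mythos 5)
Your necessity argument is the same as the paper's and is correct. Your sufficiency plan also follows the paper's route (bottom-up induction on the tree of the fragments, realizing every admissible deficit; the construction of the maximal deficit $u(C)$ by gluing the children with $r-1$ spanning-tree edges is exactly Lemma~\ref{lem:uc}). However, the ``interpolation'' step is where essentially all of the work lies, and your proposal leaves it unproven. The concrete obstruction you do not address is this: to lower the deficit of $C$ by two one must add a loop-free edge, which requires \emph{two distinct} vertices of $C$ with spare degree. The descent can get stuck with the entire remaining deficit $s$ concentrated on a single vertex $v_i$, and then neither of your two elementary modifications applies: deleting an inter-child edge on a cycle of \contracted{G}{C} moves the deficit in the wrong direction (it increases it), and re-realizing a child at a different deficit by induction changes \emph{which} of its vertices have spare degree, so all cross-edges incident to that child must be rerouted and one has to verify that enough spare degree sits at the right vertices to reattach them. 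This is precisely the content of the second half of the paper's proof of Lemma~\ref{lem:decrease}: one first reduces to the case where \contracted{G}{C} is a star whose cross-edges all touch $v_i$ (otherwise delete an edge $uv$ between two children not containing $v_i$ and create $v_iu$ and $v_iv$), and then splits into two cases according to whether some sibling $C_j$ is below $u(C_j)$ (raise it by two and attach the freed half-edges to $v_i$) or all siblings are saturated (in which case the definition of $\ell(C)$ forces the deficit of $C_i$ to exceed $\ell(C_i)+2$, so it can be lowered by two and the cross-edges reattached). Without this case analysis the claim that every intermediate deficit of the right parity is attainable is unsubstantiated.

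A secondary remark: the two ``cruxes'' you flag are not quite the right ones. Tightness of $\ell(C)$ as a \emph{lower} bound on attainable deficits is a necessity statement already contained in Lemma~\ref{lem:necessary_cond} and is not needed for sufficiency; what sufficiency requires is attainability of every value in $[\ell(C),u(C)]$ of the correct parity, and the hard point there is the concentration phenomenon described above, not the connectivity of the gluing (which Remark~\ref{rem:u}, Lemma~\ref{lem:ContractEquiv} and the spanning-tree argument of Lemma~\ref{lem:uc} already handle). Your closing computation $u(C)-\varphi(C)=\bigl(u(C_1)-\ell(C_1)\bigr)+2\sum_{i\ge 2}u(C_i)-2(r-1)\ge 0$ is correct and is a useful sanity check, but it does not substitute for the descent argument.
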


Let $G$ be a graph in $\cG(S,\mathcal{CC})$. Then the degree-deficit of $V$ equals $0$. So by Lemma~\ref{lem:necessary_cond} applied on $V$, $\ell(V)=0$ and $u(V) \ge 0$ is necessary. Moreover, since $G$ is connected, at least one edge has to have one endpoint in $C$ and one endpoint in $V \setminus C$ for every $C \subsetneq V$. Thus the degree-deficit of $C$ is at least one for every $C \in \mathcal{CC}$, $C \neq V$, and then Lemma~\ref{lem:necessary_cond} ensures that the first condition is necessary. To prove Theorem~\ref{thm:rea}, we have to show that these two conditions are sufficient. The sufficiency is an immediate corollary of the next lemma applied to $V$ with $s=\ell(V)=0$.

\begin{lemma}\label{lem:decrease}
Let $S$ be a degree sequence and $\mathcal{CC}$ be a nested partition. Let $C$ be a node of the tree of the fragments $T$ such that $u(C) \ge 0$ and, for every $C' \subsetneq C$, $u(C') \ge 1$. For every $s$ such that $\ell(C) \le s \le u(C)$ and such that $s$ has the same parity as $\ell(C)$ and $u(C)$, there exists a graph $G'$ coherent on $C$ with degree-deficit~$s$.
\end{lemma}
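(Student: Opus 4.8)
The plan is to prove the statement by induction on the tree of the fragments $T$, bottom-up from the leaves. If $C$ is a leaf $\{v_i\}$ then $\ell(C)=u(C)=d_i$, the unique admissible value is $s=d_i$, and the edgeless graph on $\{v_i\}$ is coherent with degree-deficit $d_i$. For the inductive step, let $C$ be internal with children $C_1,\dots,C_r=ch(C)$ (note that $r\ge 2$, since the singletons of $C$ already belong to $\mathcal{CC}$). First I would check that the induction hypothesis applies to each child: as $C_i\subsetneq C$ we have $u(C_i)\ge 1\ge 0$, and any $C'\subsetneq C_i$ also satisfies $C'\subsetneq C$, hence $u(C')\ge 1$. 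Thus for each child and each admissible $t\in[\ell(C_i),u(C_i)]$ of the correct parity there is a coherent graph on $C_i$ of degree-deficit $t$.

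I would then reduce the problem to the contracted level. Any coherent graph $G'$ on $C$ can be assembled from a choice, for each child, of a coherent graph on $C_i$ with degree-deficit $\sigma_i\in[\ell(C_i),u(C_i)]$, together with a multigraph $M$ on the super-vertices $C_1,\dots,C_r$ recording the edges of $G'$ that join distinct children, where $C_i$ spends $\deg_M(C_i)\le\sigma_i$ of its free half-edges (Remark~\ref{rem:u} lets us place each edge of $M$ onto concrete vertices with available degree, since the two children involved have positive degree-deficit). By Lemma~\ref{lem:ContractEquiv} applied to $G'$, the graph $G'$ is coherent on $C$ if and only if $M$ is connected, and since each edge of $M$ consumes two half-edges, the degree-deficit of $C$ equals $\sum_i\sigma_i-2|E(M)|$. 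The task becomes: given the target $s$, choose the $\sigma_i$ and a connected $M$ so that $\sum_i\bigl(\sigma_i-\deg_M(C_i)\bigr)=s$.

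I would first realize the two extremes and then interpolate in steps of $2$. For $s=u(C)$, set every $\sigma_i=u(C_i)$ and take $M$ to be a spanning tree of the children; such a tree with $\deg_M(C_i)\le u(C_i)$ exists precisely because $\sum_i u(C_i)\ge 2(r-1)$, which is equivalent to $u(C)\ge 0$, the standing hypothesis, and one checks $\sum_i u(C_i)-2(r-1)=u(C)$. For the lower end, letting $C_1$ attain the maximum in the definition of $\varphi(C)$, putting the other children at $u(C_i)$ and routing all their half-edges into $C_1$ gives a connected multistar of external deficit $\ell(C_1)-\sum_{i\ge 2}u(C_i)=\varphi(C)$, which, after the parity correction when $\varphi(C)<0$, equals $\ell(C)$. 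To reach every admissible value in between, I would start from the maximal configuration and repeatedly lower the deficit by exactly $2$ using one of the elementary moves: (a) if two children still have free half-edges, add an edge between them in $M$; (b) if all free half-edges lie in one child $C_1$ with $\sigma_1>\ell(C_1)$, lower $\sigma_1$ by $2$ via the induction hypothesis; (c) if moreover some edge of $M$ avoids $C_1$, replace it by the two edges joining $C_1$ to its endpoints; (d) if some child $C_i$ with $i\ge2$ is below $u(C_i)$, raise its deficit by $2$ and route the two new half-edges into $C_1$. Each move keeps $M$ connected and preserves coherence, lowers $s$ by $2$, and hence preserves its parity.

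The crux is to show the decrement process never stalls above $\ell(C)$. When none of (a)--(d) applies, $M$ must be a multistar centred at $C_1$ with every other child saturated at $u(C_i)$ and $\sigma_1=\ell(C_1)$, so that $s=\ell(C_1)-\sum_{i\ge 2}u(C_i)\le\varphi(C)\le\ell(C)$ (and $s\in\{0,1\}$ when $\varphi(C)<0$); combined with the necessary bound $s\ge\ell(C)$ from Lemma~\ref{lem:necessary_cond}, this forces $s=\ell(C)$. Therefore starting from the configuration of deficit $u(C)$ one reaches, in steps of $2$, every admissible value down to $\ell(C)$, and in particular the prescribed $s$. I expect the main obstacle to be the careful bookkeeping in moves (a)--(d) — verifying in each case that $M$ remains connected and that the half-edges can be re-routed to concrete vertices through Remark~\ref{rem:u} — together with the identification of the stalled configuration as exactly the bottleneck one encoded by $\varphi(C)$; the parity accounting itself is already supplied by Lemma~\ref{lem:necessary_cond}.
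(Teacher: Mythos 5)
Your proof is correct and follows essentially the same route as the paper's: realize the extreme value $u(C)$ via a degree-constrained spanning tree of the children (the paper's Lemma~\ref{lem:uc}), then descend in steps of $2$, with your moves (a)--(d) matching the paper's case analysis (two deficient vertices, the non-star case, Case~1, and Case~2 respectively) and the same identification of the only stalled configuration as the multistar encoding $\varphi(C)$. The only difference is presentational: you make the contracted multigraph $M$ on the children explicit and re-place its edges via Remark~\ref{rem:u}, which is a slightly cleaner bookkeeping device than the paper's in-place rerouting but not a different argument.
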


In order to prove Lemma~\ref{lem:decrease}, we need to show the following lemma as an intermediate step.

\begin{lemma}\label{lem:uc}
Let $S$ be a degree sequence and $\mathcal{CC}$ be a nested partition. Let $C$ be a node of the tree of the fragments $T$ such that $u(C) \ge 0$ and such that, for every $C' \subsetneq C$ we have $u(C') \ge 1$. Then there exists graph $G'$ coherent on $C$ with degree-deficit $u(C)$.
\end{lemma}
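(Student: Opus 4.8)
The plan is to first reinterpret the object we must build and then construct it by induction along $T$. Observe that a coherent graph $G'$ on $C$ with degree-deficit exactly $u(C)$ is very rigid: since $G'$ is defined on $C$ we have $\sum_{v_i \in C} d_{G'}(v_i) = 2|E(G')|$, so a degree-deficit of $u(C) = \sum_{v_i \in C} d_i - (2|C|-2)$ forces $|E(G')| = |C|-1$; as $C \in \mathcal{CC}$, coherence requires $G'[C]=G'$ to be connected, hence $G'$ must be a \emph{spanning tree} of $C$ in which, in addition, every $C' \in \mathcal{CC}$ with $C' \subsetneq C$ induces a connected subgraph and every vertex satisfies $d_{G'}(v_i) \le d_i$. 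Conversely, any such hierarchically connected spanning tree has degree-deficit exactly $u(C)$, so it suffices to construct one.

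I would proceed by induction bottom-up on $T$. If $C=\{v_i\}$ is a leaf, then $u(C)=d_i$ and the single vertex with no edge is coherent with degree-deficit $d_i=u(C)$. Otherwise $C$ is internal with children $C_1,\dots,C_r$; since $\mathcal{CC}$ contains all singletons these partition $C$, and as each $C_j$ is a proper subset of $C$ we have $r\ge 2$. Each child satisfies the hypotheses of the lemma: $u(C_j)\ge 1\ge 0$ because $C_j\subsetneq C$, and any $C''\subsetneq C_j$ also satisfies $C''\subsetneq C$, so $u(C'')\ge 1$. By induction there is a coherent graph $G'_j$ on $C_j$ with degree-deficit $u(C_j)\ge 1$, which by the reformulation is a hierarchically connected spanning tree of $C_j$. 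I take $G'$ to be the disjoint union of the $G'_j$ together with $r-1$ extra edges joining distinct children, chosen so that, viewing the children as super-nodes, these edges form a spanning tree of the super-nodes.

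The heart of the argument is the choice of these $r-1$ inter-child edges. The accounting identity is $\sum_{j=1}^r u(C_j) = \sum_{j=1}^r\big(\sum_{v_i\in C_j} d_i - (2|C_j|-2)\big) = \sum_{v_i\in C} d_i - 2|C| + 2r = u(C) + 2(r-1)$, so adding any $r-1$ inter-child edges (each consuming two units of degree-deficit) leaves the degree-deficit of $C$ equal to exactly $u(C)$; it remains only to route them respecting the per-vertex caps $d_{G'}(v_i)\le d_i$. I first fix super-degrees: I seek integers $a_1,\dots,a_r$ with $1\le a_j\le u(C_j)$ and $\sum_j a_j = 2(r-1)$. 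These exist because $\sum_j 1 = r \le 2(r-1)$ (as $r\ge 2$) and $\sum_j u(C_j)=u(C)+2(r-1)\ge 2(r-1)$ (as $u(C)\ge 0$), so starting from all $a_j=1$ the remaining slack $r-2$ can be distributed within the capacities $u(C_j)-1$. Such a sequence is realizable as the degree sequence of a tree on the $r$ super-nodes by the classical tree degree-sequence (Pr\"ufer) characterization. Finally, for each super-edge between $C_i$ and $C_j$ I select concrete endpoints: since at the moment of placement the number of endpoints already assigned inside $C_i$ is less than $a_i\le u(C_i)$, the remaining degree-deficit of $C_i$ is still at least $1$ (and likewise for $C_j$), so Remark~\ref{rem:u} provides a vertex of $C_i$ and a vertex of $C_j$ with positive remaining deficit between which the edge can be added without violating any degree constraint; distinct super-edges join distinct pairs of children, so no multiedge or loop arises.

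To conclude I would verify coherence of $G'$. The per-vertex bounds hold by construction; for every $C''\in\mathcal{CC}$ with $C''\subseteq C_j$ we have $G'[C'']=G'_j[C'']$, connected by induction; and $G'[C]=G'$ is connected because each $G'_j$ is connected and the inter-child edges link the children along a spanning tree. By nestedness the only $\mathcal{CC}$-sets contained in $C$ are $C$ itself and subsets of a single child (any set meeting two children must contain a maximal child, hence equal $C$), so all connectivity constraints are met. The step I expect to be the main obstacle is this edge-placement: ensuring the $r-1$ edges can be simultaneously organized into a tree on the super-nodes \emph{and} placed onto concrete vertices respecting each cap $d_i$. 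The accounting identity, together with the tree degree-sequence theorem and Remark~\ref{rem:u}, is exactly what makes this step go through.
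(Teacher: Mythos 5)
Your proof is correct and follows essentially the same route as the paper: bottom-up induction on $T$, realizing each child $C_j$ with degree-deficit $u(C_j)$ and then joining the children by $r-1$ inter-child edges forming a tree on the super-nodes, using the identity $\sum_j u(C_j) = u(C) + 2(r-1)$. The only difference is mechanical — you fix the super-node degrees $a_j$ upfront and invoke the tree degree-sequence (Pr\"ufer) characterization, whereas the paper builds the same tree by a greedy edge-addition-and-contraction argument — and your opening observation that the target graph is forced to be a spanning tree is a nice, if inessential, clarification.
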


\begin{proof}
We prove it by induction bottom-up from the leaves. For leaves, we have $u(C)=u(v_i)=d_i$ and then the conclusion holds. Let us now assume that $C$ is an internal node and $C_1,\ldots,C_r$ be the children of $C$ in $T$. By induction there exist subgraphs $G_1,\ldots,G_r$ coherent on respectively $C_1,\ldots,C_r$ such that the degree-deficit $s_i$ of $G_i$ is $u(C_i)$ for every $i \le r$.  Since $u(C_i) \ge 1$ for every $i$ and $u(C) \ge 0$, we can connect all these subgraphs using $r-1$ edges. To prove it, let us create an auxiliary graph $H$ with initial vertex set $1,\ldots,r$ and with degree constraint $d_i:=u(C_i)$ for vertex $C_i$. Note that all the degree constraint are positive since $u(C_i)>0$. Now, since $u(C) \ge 0$, the sum of the degree constraints in $H$ is at least $2(r-1)$. In particular, there exists a vertex $i$ of $H$ of degree $d_i \ge 2$. Let $j$ be a vertex of minimum degree. Create an edge between $C_i$ and $C_j$ in $G'$ (it is possible by definition of $d_i$ and $d_j$); And in $H$ contract the two vertices $i$ and $j$ into a single one of degree constraint $d_i+d_j -2$. Let us still denote by $H$ the resulting graph. Note that the graph induced by $C_i \cup C_j$ now induces a connected graph. Moreover, the graph $H$ still satisfies $\sum d_i \ge 2(|V(H)|-1)$ and all its degree constraints are positive. So we can repeat the operation until it only remains one vertex. At that point the graph $G'$ is connected on $C$. Furthermore its degree-deficit is $\sum\limits_{1}^{r} u(C_i) - 2(r-1) = \sum\limits_{v_i\in C} d_i - 2(|C| - 1 )=u(C)$ by Remark~\ref{rem:u} which concludes the proof.
\end{proof}

Using Lemma~\ref{lem:uc}, we can now prove Lemma~\ref{lem:decrease}.

\begin{proof}[Proof of Lemma~\ref{lem:decrease}]
 We prove it bottom-up from the leaves of the tree of the fragments. If $C$ is a leaf, then $C=\{ v_i \}$ and the degree deficit $s$ of $C$ is always $d_i=u(C)=\ell(C)$.

 Let $C$ be an internal node.   When $s=u(C)$, the conclusion holds by Lemma~\ref{lem:uc}. Let us prove that if there is a graph $G$ coherent on $C$ with degree-deficit $s$, then there exists a graph $G'$ coherent on $C$ with degree-deficit $s-2$ as long as $s-2 \ge \ell(C)$.
 Let us denote by $C_1,\ldots,C_r$ the children of $C$ in the tree of the fragments.

 If there are two vertices $v_i,v_j$ of $G$ such that $d_{G}(v_i) <d_i$ and $d_G(v_j) < d_j$, then we can create an edge $v_iv_j$ and the resulting graph indeed has degree-deficit $s-2$ since $d_G(v_i)-d_i$ and $d_G(v_j)-d_j$ decrease by one (and the other differences remain unchanged). So we can assume that there exists a vertex $v_i$ such that $d_i-d_G(v_i)=s$.  Let us denote by $C_i$ the child of $C$ in the tree of the fragment containing $v_i$.

If the graph $\contracted{G}{C}$ is not a star, then there is an edge $uv$ with one endpoint in $C_j \in ch(C)$ and one endpoint in $C_{j'} \in ch(C)$ with $j,j' \ne i$. Then, delete $uv$ and create $v_iu$ and $v_iv$. Note that all the graphs $\contracted{G}{C'}$ for $C' \subseteq C$ remain connected. Moreover $d_i-d(v_i)$ decreases by two and the other differences are not modified.
So from now on we assume that $\contracted{G}{C}$ is a star. Using similar arguments, we can moreover prove that all the edges with one endpoint in $C_i$ and one endpoint in $C' \ne C_i$ have endpoint $v_i$.
We now distinguish two cases.  \smallskip

\noindent\textbf{Case 1.} There exists $C_j \ne C_i$  of degree-deficit $s_j < u(C_j)$. \\
By induction, there exists a graph $G'_j$ defined on $C_j$ with degree-deficit is $s_j+2$. Let us now define a graph $G'$ on $C$ with degree-deficit $s-2$. The graph $G[C_a]=G'[C_a]$ for $a \ne j$ and $G[C_j]=G'_j$. Now let $v_iu$ be an edge of $G$. If $u \notin C_j$, we create the same edge in $G'$. Now let $u_1,\ldots,u_{s_j+2}$ be the vertices of $C_j$ missing at least one edge (with multiplicity). We create the set of edge $v_iu_1,\ldots,v_iu_{s_j+2}$. The graph $G'$ satisfies all the constraints and the degree-deficit of $G'$ is the degree-deficit of $G$ minus $2$.\smallskip


\noindent\textbf{Case 2.} For every $j \ne i$ the degree-deficit is equal to $u(C_j)$. \\
Since $ \ell(C_i) - \sum_{j \neq i}u(C_j) \le s-2$ and since $\contracted{G}{C}$ is a star rooted at $C_i$, the degree-deficit $s_i$ of $C_i$ is strictly larger than $\ell(C_i)+2$. By induction, there exists a graph $G'_i$ defined on $C_i$ with degree-deficit is $s_i-2$. Let us now define a graph $G$ on $C$ with degree-deficit $s-2$. The graph $G[C_a]=G'[C_a]$ for $a \ne i$ and $G[C_i]=G'_i$. Now let $w_1,\ldots,w_r$ be the neighbors of $v_i$ in $C \setminus V_i$ in the graph $G$. Since the degree-deficit of $G[C_i]$ was at least $r+2$, the graph $G'[C_i]$ has degree-deficit at least $r$. Thus there exists  $u_1,\ldots,u_r$ in $C_i$ missing edges. We add the edges $u_jw_j$ for $j \le r$. All the constraints are satisfied for $G'$ and $G'$ with degree-deficit $s-2$, which completes the proof.
\end{proof}

\begin{lemma}\label{lem:algo_poly}
 The proof of Theorem~\ref{thm:rea} can be turned into a polynomial time algorithm.
\end{lemma}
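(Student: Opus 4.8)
The plan is to check that every quantity used in the proof of Theorem~\ref{thm:rea} is computable within the resource bound, and that the witness graph produced by the constructive Lemmas~\ref{lem:uc} and~\ref{lem:decrease} can be assembled within the same bound. I would first observe that since $\mathcal{CC}$ is a laminar family containing $V$ and all singletons, the tree of the fragments $T$ has only $O(n)$ nodes and is built directly from $\mathcal{CC}$ in polynomial time. The values $u(C)=\sum_{v_i\in C}d_i-(2|C|-2)$ are then obtained by a single bottom-up traversal that simultaneously maintains the subtree degree-sums $\sum_{v_i\in C}d_i$. The values $\ell(C)$ are computed in the same traversal: for an internal node $C$ with children $C_1,\dots,C_r$, writing $U=\sum_i u(C_i)$, the defining expression simplifies to $\varphi(C)=\max_j\big(\ell(C_j)+u(C_j)\big)-U$, evaluated in $O(r)$ time from the already-computed children's values. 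Hence the whole computation of $u$ and $\ell$ costs $O\big(\sum_C|ch(C)|\big)=O(n)$ arithmetic operations, and checking the two conditions of Theorem~\ref{thm:rea} ($u(C)\ge 1$ for internal $C$, and $\ell(V)=0$, $u(V)\ge 0$) is immediate.

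If the conditions fail I return ``no''; otherwise I build a witness by turning the proofs of Lemmas~\ref{lem:uc} and~\ref{lem:decrease} into recursive procedures. Following Lemma~\ref{lem:uc}, I construct bottom-up a coherent graph on each node whose degree-deficit equals $u(C)$; at an internal node the children's graphs are joined by the auxiliary-graph routine, which repeatedly links a maximum- and a minimum-deficit child, inserting exactly $r-1$ edges. This ``maximum-deficit'' graph is sparse (it has only $n-1$ edges overall). Then, following Lemma~\ref{lem:decrease} applied to $V$ with target $s=\ell(V)=0$, I repeatedly decrease the degree-deficit by two: each step either adds an edge between two still-deficient vertices, or, when $\contracted{G}{C}$ is a star, reroutes one edge and adjusts the deficit of a single child by two. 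Every elementary operation preserves coherence by construction, so correctness is inherited directly from the two lemmas.

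The delicate point, and the step I expect to be the main obstacle, is the running-time bound rather than correctness. First, for multigraphs the witness has $D/2$ edges with $D:=\sum_i d_i$, which is exponential in a binary encoding of $S$; so ``polynomial time'' must be read as polynomial in $n$ together with the output size, i.e.\ in $n+D$ (equivalently, polynomial in the input when the degrees are given in unary). Second, and more seriously, the recursion of Lemma~\ref{lem:decrease} must be implemented \emph{incrementally}. If one re-built an entire child subtree from scratch each time its target deficit changed by two, the cost would obey a recurrence of the form $F(C)\le O(D)\cdot\max_i F(C_i)$, yielding a bound $D^{\Theta(d)}$ that is exponential in the height $d$. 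To avoid this, I would maintain a single current graph and realize each deficit change as the local modification prescribed by the proof: a decrease at a node triggers at most one deficit adjustment in one child, which propagates down a single root-to-node path of length at most $d$, each link touching $O(\mathrm{poly}(n))$ edges. Since the root undergoes at most $u(V)/2\le D/2$ decrease steps, the total number of elementary operations is $O\big(D\cdot d\cdot\mathrm{poly}(n)\big)$, which is polynomial in $n+D$ and establishes the lemma.
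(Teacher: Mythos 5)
Your proposal is correct, and its skeleton necessarily matches the paper's (compute $u$ and $\ell$ bottom-up, test the two conditions, build the maximum-deficit graph via Lemma~\ref{lem:uc}, then drive the deficit of $V$ down to $\ell(V)=0$ via Lemma~\ref{lem:decrease}); your closed form $\varphi(C)=\max_j\big(\ell(C_j)+u(C_j)\big)-\sum_i u(C_i)$ is a correct and convenient rewriting. Where you genuinely diverge is in how the recursion of Lemma~\ref{lem:decrease} is tamed. You correctly identify that a naive implementation rebuilds child subtrees and explodes as $D^{\Theta(d)}$, and you fix it by keeping the decrement-by-two loop at the root but realizing each decrement as a chain of local edits along a single root-to-node path, for a total of $O(D\cdot d\cdot\mathrm{poly}(n))$ operations. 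The paper instead collapses the entire loop into one recursive call per level: after greedily adding edges until the residual deficit is concentrated on a single vertex $v_i$ and $\contracted{G}{V}$ is a star centered at the child $C_i$ containing $v_i$, it computes the final target deficit $s'=s+|E_{G'}(C_i,V\setminus C_i)|$ for $C_i$ directly and recurses once, yielding a $d\cdot P(n)$ bound with a single descent. The paper's trick is cleaner and avoids the extra factor of $D$ in the recursion itself; your version is more laborious but equally valid, and your implementation of a single decrement as one local edit per level is essentially the case analysis of Lemma~\ref{lem:decrease} read incrementally (Case 1 deletes one removable edge in a maxed-out sibling, found via the argument of Lemma~\ref{lem:GoodBadEdge}; Case 2 pushes the decrement down into $C_i$). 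Your caveat about encodings is also a fair point the paper glosses over: both algorithms insert edges one at a time, so ``polynomial'' should be read as polynomial in $n+\sum_i d_i$ (degrees in unary, or equivalently in the size of the output multigraph) unless multiplicities are batched.
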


\begin{proof}[Sketch of the proof]
 In order to prove it let us prove that we can create a graph in $\cG(S,\mathcal{CC})$ with degree-deficit $s$ where $\ell(C) \le s \le u(C)$ (with the parities of $\ell(C)$ and $u(C)$) in polynomial time. For simplicity, in what follows, we will assume that $C=V$ (free to restrict the family $\mathcal{CC}$, we can assume that we are in that case). We prove by induction on the depth of the tree of the fragment that such a graph can be obtained in time $d \cdot P(n)$ where $d$ denotes the depth of the rooted tree. Note that when the rooted tree has depth $0$, the proof is immediate.

  First note the the proof of Lemma~\ref{lem:uc} is algorithmic and the algorithm runs in polynomial time. So we can construct a graph such that each set $C$ of $\mathcal{CC}$ induces a connected subgraph and with degree-deficit $u(V)$. Now we add edges in this graph. We stop either when the degree-deficit $s$ is reached (and we are done) or when no edge can be added further. Note that in this case, it only remains one vertex $v$ such that $s < d_i-d(v_i)$. Let $C_i$ be the child of $V$ containing $v_i$. As in the proof of Lemma~\ref{lem:decrease}, if there exists $C_j$ with $j \ne i$ such that the degree-deficit of $C_j$ is not $u(C_j)$, one can find an edge to delete in $G[C_j]$ to decrease the degree-deficit of $G[V]$. This edge can be found in polynomial time by Lemma~\ref{lem:GoodBadEdge}.
  Moreover, as in the proof of Lemma~\ref{lem:decrease}, we can assume that $\contracted{G}{V}$ is a star. Let $G'$ be the resulting graph.

  Let us denote by $s':= s + E_{G'}(C_i,V \setminus C)$. By induction, we can construct a graph on $C_i$ with degree-deficit $s'$ in time $(d-1) \cdot P(n)$. We complete this graph in such a way the graph induced by $V \setminus C_i$ is $G'[V \setminus C_i]$ and we add $E_{G'}(C_i,V \setminus C)$ edges between $C_i$ and its complement. The number of performed operations is indeed polynomial. So the complexity of this algorithm is indeed polynomial.
\end{proof}

\section{Connectivity of $\cG(S,\mathcal{CC})$ and approximation algorithm}\label{sec:connectivity}

\paragraph{Well-structured subtrees and extensions.}

\begin{figure}
 \centering
 \includegraphics[scale=0.6]{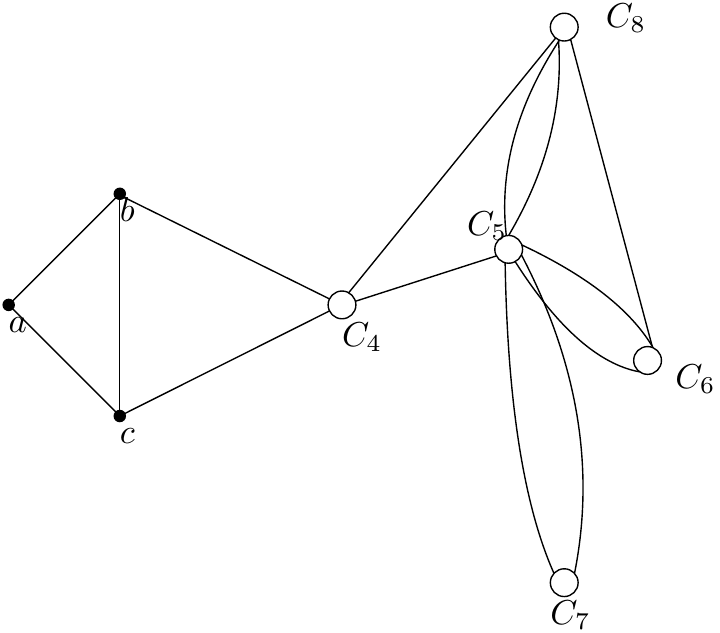}
 \caption{The graph $\contractedd{G}{T'}$ where the leaves of the well-structured subtree $T'$ are $a,b,c,C_4,C_5,C_6,C_7$. The graph $G$ is the one of Figure~\ref{fig:sfig2} and the tree $T$ is the one of Figure~\ref{fig:sfig4}.}
 \label{fig:contractedd}
\end{figure}

A subtree $T'$ of the tree of the fragment $T$ is \emph{well-structured} if
\begin{enumerate}[(i)]
 \item it contains the root, and
 \item if $u,v$ are two children of $w \in T'$ then either both $u$ and $v$ are in $T'$ or none of them is in $T'$.
\end{enumerate}
A set $C \in \mathcal{CC}$ is in a well-structured subtree $T'$ if the node labeled by $C$ is in $T'$. Given a well-structured subtree $T'$, we denote by $\mathcal{CC}_{T'}$ the subset of $\mathcal{CC}$ corresponding to the \emph{inner} nodes of $T'$, i.e. all the nodes of $T'$ but the leaves.
Note that the root of $T$ and the whole tree $T$ are well-structured.

The graph \emph{inherited from a well-structured subtree $T' \subseteq T$} denoted by $\contractedd{G}{T'}$ is the graph where the vertex set is the set of leaves of $T'$ and where there is an edge  between $X$ and $Y$ with multiplicity $\alpha$ if there are $\alpha$ edges with one endpoint in $X$ and one endpoint in $Y$ in $G$. Note that $\contractedd{G}{T}$ is the graph $G$.
Let $C \in \mathcal{CC}_{T'}$. We denote by $\contractedd{G}{T'}[C]$ the subgraph of $\contractedd{G}{T'}$ induced by the leaves of the subtree of $T'$ rooted at $C$.
A flip in $\contractedd{G}{T'}$ is \emph{$T'$-correct} if it maintains the connectivity of $\contractedd{G}{T'}[C]$ for any $C \in \mathcal{CC}_{T'}$.
Note that a correct flip is a $T$-correct flip.

For the sake of readability, vertices of $\contractedd{G}{T'}$ will be denoted with capital letters and vertices of the original graph $G$ will be denoted with lower case letters.

\begin{lemma}\label{lem:wellstructured1}
Let $S$ be a degree sequence and $\mathcal{CC}$ be a nested partition. Let $G$ and $H$ be two graphs in $\cG(S,\mathcal{CC})$. Let $T'$ be a well-structured subtree of the tree of the fragments $T$ and let $(AB,CD)\rightarrow(AC,BD)$ be a $T'$-correct flip.

We can find in polynomial time a flip $(ab,cd)\rightarrow(ac,bd)$ of $G$ where $a,b,c$ and $d$ are respectively in $G[A],G[B],G[C]$ and $G[D]$ which is correct for $G$.
Moreover if $AB$ and $CD$ are in $\contractedd{G}{T'} - \contractedd{H}{T'}$ then we can assume that $ab$ and $cd$ are in $G-H$.
\end{lemma}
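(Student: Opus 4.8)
The plan is to realize the given flip of $\contractedd{G}{T'}$ by choosing concrete endpoints inside the fragment-sets $A,B,C,D$, and then to check that the induced flip of $G$ disconnects no set of $\mathcal{CC}$. Since $AB$ and $CD$ are edges of $\contractedd{G}{T'}$, by definition $|E_G(A,B)|\geq 1$ and $|E_G(C,D)|\geq 1$, so there exist edges $ab,cd$ of $G$ with $a\in A$, $b\in B$, $c\in C$, $d\in D$. The four sets $A,B,C,D$ are distinct leaves of $T'$ (for a non-trivial flip), hence pairwise disjoint sets of $\mathcal{CC}$, so $a,b,c,d$ are pairwise distinct and the flip $(ab,cd)\rightarrow(ac,bd)$ is loopless and well defined; moreover $ab\neq cd$ since $\{A,B\}\neq\{C,D\}$. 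This choice is clearly computable in polynomial time.

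For the \emph{moreover} part I would refine the choice by a counting argument. If $AB\in \contractedd{G}{T'}-\contractedd{H}{T'}$, then writing $m_G,m_H$ for multiplicities we have $|E_G(A,B)|=\sum_{a\in A,b\in B}m_G(ab)>\sum_{a\in A,b\in B}m_H(ab)=|E_H(A,B)|$, so some pair $(a,b)$ satisfies $m_G(ab)>m_H(ab)$, i.e. $ab\in G-H$. The same reasoning applied to $CD$ yields $cd\in G-H$. These edges are still admissible for the first part, so both requirements hold at once.

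The heart of the proof is correctness, and the key observation is that \emph{every edge modified by the flip joins two distinct leaves of $T'$}: the removed edges $ab,cd$ and the created edges $ac,bd$ each have their two endpoints in two different sets among $A,B,C,D$. Consequently, for any $C'\in\mathcal{CC}$ contained in a single leaf of $T'$ — that is, a leaf of $T'$ itself or a proper descendant of one in $T$ (recall descendants satisfy $C'\subseteq X$) — no edge of $G[C']$ is touched, so $G[C']$ remains connected; in particular every leaf-set of $T'$ still induces a connected subgraph. By the well-structured property, every node of $T$ either lies in $T'$ or descends from a leaf of $T'$, so it only remains to handle the inner nodes of $T'$.

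Finally, I would observe that the projection of $(ab,cd)\rightarrow(ac,bd)$ to $\contractedd{G}{T'}$ decreases the multiplicities of $AB,CD$ by one and increases those of $AC,BD$ by one, hence is exactly the given flip $(AB,CD)\rightarrow(AC,BD)$; since this flip is $T'$-correct, $\contractedd{G}{T'}[C']$ stays connected for every inner node $C'$ of $T'$. Fixing such a $C'$, I partition the vertices of $G[C']$ according to the leaf of $T'$ containing them: each part induces a connected subgraph (by the previous paragraph) and the quotient is precisely $\contractedd{G}{T'}[C']$, which is connected. Since contracting connected subgraphs preserves connectivity both ways (as in Lemma~\ref{lem:ContractEquiv}), $G[C']$ is connected. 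Thus $G[C']$ is connected for every $C'\in\mathcal{CC}$ and the flip is correct for $G$. The part I expect to require the most care is exactly this taxonomy of $\mathcal{CC}$-sets relative to $T'$ together with the contraction argument transferring connectivity of $\contractedd{G}{T'}[C']$ back to $G[C']$.
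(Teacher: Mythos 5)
Your proposal is correct and follows essentially the same route as the paper: realize the flip by picking one $G$-edge between each relevant pair of leaves (with the same counting argument for the \emph{moreover} part), observe that every set of $\mathcal{CC}$ lying inside a single leaf of $T'$ is untouched, and transfer connectivity of $\contractedd{G}{T'}[C']$ back to $G[C']$ for the inner nodes. The only cosmetic difference is that the paper argues this last transfer by taking a minimal disconnected $C'$ and lifting a path, while you argue directly via the quotient by the (connected) leaf classes; both are instances of the same contraction principle as in Lemma~\ref{lem:ContractEquiv}.
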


\begin{proof}
Let $(AB,CD)\rightarrow(AC,BD)$ be a flip on $\contractedd{G}{T'}$. Then there exists a flip $(ab,cd)\rightarrow(ac,bd)$ on $G$ where $a,b,c$ and $d$ are respectively in $G[A],G[B],G[C]$ and $G[D]$. Indeed, by definition of $AB$ (resp. $CD$) in $\contractedd{G}{T'}$, it means that there is an edge between $a \in A$ and $b \in B$ (resp. $c \in C$ and $d \in D$). Since $A,B,C,D$ are disjoint, it is possible to replace $ab,cd$ by $ac,bd$ without creating any loop.  Moreover if $AB$ and $CD$ are in the symmetric difference of $\contractedd{G}$, then the number of edges between $A$ and $B$ is larger in $G$ than in $H$ and then there exists an edge $ab$ in the $G-H$. Let us denote by $G'$ the graph obtained after the flip $(ab,cd)\rightarrow(ac,bd)$ on $G$.

For every set $C' \in \mathcal{CC} \setminus \mathcal{CC}_{T'}$, the set $G'[C']$ is connected. Indeed, since $C' \notin  \mathcal{CC}_{T'}$, at most one of the four points $a,b,c,d$ are in $C'$. Thus the graph $\contracted{G'}{C'}=\contracted{G}{C'}$ is connected. Note that it holds for any $C' \notin  \mathcal{CC}_{T'}$, which is a family downward closed. So Lemma~\ref{lem:ContractEquiv} ensures that $G'[C']$ is connected since $\contracted{G}{C''}$ is connected for every $C'' \subseteq C'$.

Let us finally prove that the flip $(ab,cd)\rightarrow(ac,bd)$  maintains the connectivity of $G[C]$ for $C \in \mathcal{CC}_{T'}$. Assume by contradiction that $G[C]$ is not connected for some $C \in \mathcal{CC}_{T'}$. Amongst all such sets of $\mathcal{CC}_{T'}$, select $C$ minimal by inclusion. Let $u$ and $w$ be such that there is no path from $u$ to $w$ in $G[C]$. Let $U$ be the node of $\contractedd{G}{T'}$ containing $u$ and $W$ be the node of $\contractedd{G}{T'}$ containing $w$. Since the flip $(AB,CD)\rightarrow(AC,BD)$ is $T'$-correct, there is a path $U=U_0U_1,\ldots,U_\ell=W$ from $U$ to $W$ in $\contractedd{G'}{T'}[C]$. For every edge $U_iU_{i+1}$ of this path, there is an edge $u_iu_{i+1}'$ of $G$ with one endpoint in $U_i$ and one endpoint in $U_{i+1}$. Moreover, by minimality of $C$, for every pair $u_i,u_i'$ in $U_i$ there is a path in $G'[U_i]$ (and then in $G'[C]$) from $u_i$ to $u_i'$. Moreover since $U=U_0$ and $W=U_\ell$ are connected, there is a path from $u$ to $u_0$ in $G'[U]$ and a path from $u_\ell'$ to $w$ in $G'[W]$. So the path $P$ from $U$ to $W$ in $\contractedd{G'}{T'}[C]$ can be transformed into a path from $u$ to $w$ in $G[C]$, a contradiction with the fact that $G[C]$ is not connected.
\end{proof}

Lemma~\ref{lem:wellstructured1} permits to work with the graphs $\contractedd{G}{T'}$ and $\contractedd{H}{T'}$ and ensures that if we make a flip on one of these graphs, it can be simulated by a flip in the original graph. A natural question immediately arises, how can we ensure that the graph $G$ obtained after the flip still satisfies all the constraints? The next lemma will permit to answer this question.

Let $G$ and $H$ be two graphs in $\cG(S,\mathcal{CC})$. Let $T'$ be a well-structured subtree of the tree of the fragments $T$.
We can extend the notion of good and bad edges to the graphs $\contractedd{G}{T'}$ and $\contractedd{H}{T'}$. (It is \emph{good} if it is in both graphs and \emph{bad} otherwise).

Let $T$ be the tree of the fragments and $T_1$ be a well-structured subtree. The tree $T_2$ is an \emph{extension} of $T_1$ on \emph{extension node $C$} if $C$ is a leaf of $T_1$ and $T_2$ is $T_1$ plus all the children of $C$ in $T$. The set of children $\mathcal{X}$ of $C$ is then called the set of  \emph{special vertices of $\contractedd{G}{T_2}$}. Any flip between two edges of $\contractedd{G}{T_2}$ with at least one endpoint in $\mathcal{X}$ which:
 \begin{itemize}
  \item maintains the connectivity of $\contractedd{G}{T_2}[\mathcal{X}]$ and,
  \item does not create any edge in $\contractedd{G}{T_2} \setminus \mathcal{X}$.
 \end{itemize}
 is called a \emph{special flip}.



\begin{lemma}\label{lem:conditionflip}
 Let $S$ be a degree sequence and $\mathcal{CC}$ be a nested partition. Let $G$ in $\cG(S,\mathcal{CC})$. Let $T_2$ be an extension of a well-structured subtree $T_1$. Any special flip is $T_2$-correct.
\end{lemma}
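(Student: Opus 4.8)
The plan is to start from the simple structural observation that extending the well-structured subtree $T_1$ at its leaf $C$ turns $C$ into an inner node whose children $\mathcal{X}$ become the new leaves, so the inner nodes satisfy $\mathcal{CC}_{T_2}=\mathcal{CC}_{T_1}\cup\{C\}$. Thus, to prove $T_2$-correctness of the special flip I must show it keeps $\contractedd{G}{T_2}[C']$ connected for $C'=C$ and for every $C'\in\mathcal{CC}_{T_1}$. I rename the four endpoints of the flip to $(PQ,RS)\rightarrow(PR,QS)$ to avoid a clash with the extension node $C$. Two facts will be used throughout. First, each of the four edges $PQ,RS,PR,QS$ has an endpoint in $\mathcal{X}$: the two flipped edges by definition of a special flip, and the two created edges by the second defining condition (otherwise a created edge would lie in $\contractedd{G}{T_2}\setminus\mathcal{X}$). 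Second, before the flip $\contractedd{G}{T_2}[C']$ is connected for every $C'\in\mathcal{CC}_{T_2}$, since $G[C']$ is connected and $\contractedd{G}{T_2}[C']$ is obtained from it by identifying vertices.

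The case $C'=C$ is immediate: the leaves of the subtree of $T_2$ rooted at $C$ are exactly $\mathcal{X}$, so $\contractedd{G}{T_2}[C]=\contractedd{G}{T_2}[\mathcal{X}]$, whose connectivity is preserved by the first defining condition of a special flip. For $C'\in\mathcal{CC}_{T_1}$, I distinguish cases according to the position of $C$ relative to $C'$; since $C$ is a leaf of $T_1$, no inner node of $T_1$ lies strictly below it, so either $C\cap C'=\emptyset$ or $C\subsetneq C'$. If $C\cap C'=\emptyset$, then $\mathcal{X}$ is disjoint from the leaves of $T_2$ lying below $C'$; as each of the four edges meets $\mathcal{X}$, none of them lies inside the subtree rooted at $C'$, hence the flip leaves $\contractedd{G}{T_2}[C']$ untouched and it stays connected.

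The remaining and main case is $C\subsetneq C'$, where the flip genuinely modifies edges inside $\contractedd{G}{T_2}[C']$. The key idea is to contract $\mathcal{X}$ back to a single vertex $z$: this turns $\contractedd{G}{T_2}[C']$ into $\contractedd{G}{T_1}[C']$, which is connected. I claim the flip is invisible under this contraction. Indeed, the four edges $PQ,QS,RS,PR$ form the $4$-cycle $P$-$Q$-$S$-$R$-$P$, and the statement ``every edge meets $\mathcal{X}$'' says exactly that the endpoints lying outside $\mathcal{X}$ form an independent set of this cycle; such an independent set is empty, a single vertex, or one of the two antipodal pairs $\{P,S\}$, $\{Q,R\}$. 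In each case, after sending every $\mathcal{X}$-endpoint to $z$ and discarding loops, one checks directly that the multiset $\{PQ,RS\}$ and the multiset $\{PR,QS\}$ have the same image (at most two non-$z$ vertices survive, and they are opposite corners). Hence contracting $\mathcal{X}$ in the post-flip graph still yields the unchanged connected graph $\contractedd{G}{T_1}[C']$.

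To conclude I combine the two facts about the post-flip graph $\contractedd{G}{T_2}[C']$: contracting $\mathcal{X}$ gives a connected graph, and the induced subgraph on $\mathcal{X}$, namely $\contractedd{G}{T_2}[\mathcal{X}]$, is connected by the first defining condition. By the un-contraction principle underlying Lemma~\ref{lem:ContractEquiv} (any vertex reaches the $\mathcal{X}$-blob through the contracted graph, and inside $\mathcal{X}$ everything is connected), $\contractedd{G}{T_2}[C']$ is connected, which settles this case and the lemma. The delicate point I expect to have to argue carefully is precisely this last case: verifying that contracting the special vertices makes the special flip vanish is what lets me reduce connectivity of the refined graph to the already-known connectivity of the coarser graph $\contractedd{G}{T_1}[C']$.
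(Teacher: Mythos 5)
Your proof is correct, and its skeleton matches the paper's: you split on whether the inner node $C'$ is disjoint from the extension node or contains it, dispose of the disjoint case by noting that all four edges of the flip meet $\mathcal{X}$ and hence avoid $\contractedd{G}{T_2}[C']$, and in the remaining case combine the post-flip connectivity of $\contractedd{G}{T_2}[\mathcal{X}]$ with the preservation of the connections between $\mathcal{X}$ and the rest. Where you genuinely diverge is in how you establish that preservation. The paper argues by contradiction on paths: it takes a shortest path from a vertex $U$ to $\mathcal{X}$, observes that by minimality only its last edge can have been flipped, and uses the no-new-edge-outside-$\mathcal{X}$ condition to replace that last edge by another edge into $\mathcal{X}$, so every vertex of $C'$ still reaches $\mathcal{X}$. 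You instead prove the stronger and cleaner statement that the flip is literally invisible in the quotient obtained by contracting $\mathcal{X}$ to a single vertex, i.e.\ that $\contractedd{G}{T_1}[C']$ is unchanged; your verification via the remark that the endpoints outside $\mathcal{X}$ form an independent set of the $4$-cycle $P$--$Q$--$S$--$R$--$P$ (empty, a singleton, or an antipodal pair) is a neat way to cover all configurations at once, and the conclusion then follows by the same un-contraction principle as Lemma~\ref{lem:ContractEquiv}. Your route buys a tighter invariant (the coarser graph is identical, not merely still connected) and avoids the slightly delicate minimal-path surgery of the paper; the paper's route is more local and never needs to speak about the global quotient. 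Both arguments ultimately rest on exactly the two defining conditions of a special flip, so the substance is the same.
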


\begin{proof}
Let us denote by $C$ the extension node and by $\mathcal{X}$ the set of special vertices. Let $T$ be the tree of the fragments.
Let $C' \in \mathcal{CC}_T$. Note that either $\mathcal{X}$ is included in the subtree rooted at $C'$ or $\mathcal{X}$ does not intersect it since $\mathcal{X}$ is the set of children of some node of the tree $T$. If $C'$ does not contain $\mathcal{X}$ in its subtree, then the result is immediate. Indeed, the flip does not affect any edge of $G[C']$ (since both edges of the flip have at least one endpoint in $\mathcal{X}$) and then the graph $\contractedd{G}{T_2}[C']$ is not modified and is still connected.

Assume now that $C'$ contains $\mathcal{X}$ in its subtree. Assume that there is no path from a vertex $U$ to $V$ in $\contractedd{G}{T_2}[C']$ after the flip. Let $P$ be a path from $U$ to $V$ in $\contractedd{G}{T_2}[C']$.
Since we only modify edges with at least one endpoint in $\mathcal{X}$, the path has to pass through $\mathcal{X}$. Let us prove that there still exists a path from $U$ (resp. $V$) to $\mathcal{X}$ after the flip. Since $\mathcal{X}$ still induces a connected subgraph after the flip by assumption, we will obtain a contradiction. Let $P'$ be a minimal path from $U$ to $\mathcal{X}$ in  $\contractedd{G}{T_2}$. Let $e$ be the last edge of $P'$ and $W$ the endpoint of $e$ not in $\mathcal{X}$. Since the path does not exist after the flip, the edge $e$ has been flipped. Since we do not create any edge in $\contractedd{G}{T_2} \setminus \mathcal{X}$, the edge $e$ has been replaced by an edge between $W$ and $W'$ where $W' \in \mathcal{X}$, a contradiction. So $\contractedd{G}{T_2}[C']$ is connected after the flip for any $C' \notin \mathcal{CC}_{T_2}$.
\end{proof}

Two graphs $G$ and $H$ \emph{agree on well-structured subtree $T'$} if $\contractedd{G}{T'} = \contractedd{H}{T'}$. Note that if $G$ and $H$ agree on $T$ then $G=H$.

The connectivity of $\cG(S,\mathcal{CC})$ and the approximation algorithm will follow from the next two lemmas. Before stating them formally, let us briefly describe the ideas of the proof. We will start with a graph the well-structured subtree $T'$ reduced to the root. This well-structured subtree will grow little by little during the proof until $T'=T$. Our goal consists in transforming $\contractedd{G}{T'}$ into $\contractedd{H}{T'}$ via special flips. Lemma~\ref{lem:conditionflip} ensures that this sequence of flips is $T'$-correct and
Lemma~\ref{lem:wellstructured1} ensures that this sequence can be adapted into a sequence of flips for $G$ that are correct.
However, in order to be able to find such a transformation we first need to transform the graphs in such a way the two graphs $\contractedd{G}{T'}$ and $\contractedd{H}{T'}$ have the same degree sequence. Lemma~\ref{lem:diffdegree} will ensure that it is possible to assume it. Then Lemma~\ref{lem:samedegree} will guarantee that a sequence of special flips permits to transform $\contractedd{G}{T'}$ into $\contractedd{H}{T'}$. So we finally obtain two graphs (still denoted by $G$ and $H$ for convenience) such that $\contractedd{G}{T'}=\contractedd{H}{T'}$. In that case, we perform an extension on a leaf of the subtree $T'$ and repeat the process until $T'=T$. At the end of this last step, we get $G=H$ since $G=\contractedd{G}{T'}=\contractedd{H}{T'}=H$.

\begin{lemma}\label{lem:diffdegree}
 Let $S$ be a degree sequence and $\mathcal{CC}$ be a nested partition. Let $T$ be the tree of the fragments. Let $T_2$ be an extension of a well-structured subtree $T_1$ on extension node $C$. Let $G,H$ be two graphs of $\cG(S,\mathcal{CC})$ that agree on $T_1$.

 We can find in polynomial time  a sequence of correct flips that transform $G$ into $G'$ and $H$ into $H'$ in such a way $\contractedd{G'}{T_2}$ and $\contractedd{H'}{T_2}$ have the same degree sequence and $G',H'$ still agree on $T_1$. The number of flips in the sequence is at most $\delta(\contractedd{G}{T_2},\contractedd{H}{T_2})/2$ and no flip modifies a good edge. Moreover we have $\delta(\contractedd{G'}{T_2},\contractedd{H'}{T_2}) \le \delta(\contractedd{G}{T_2},\contractedd{H}{T_2})$.
\end{lemma}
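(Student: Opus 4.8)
The plan is to push everything down to the children $\mathcal{X}$ of the extension node $C$ and then equalise their degree-deficits using flips that only ever rewire bad edges.

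First I would record what $T_1$-agreement buys us. Since $T_2$ is obtained from $T_1$ by blowing up the leaf $C$ into its children $\mathcal{X}$, any two leaves of $T_2$ outside $\mathcal{X}$ have the same edge multiplicity in $\contractedd{G}{T_2}$ and in $\contractedd{H}{T_2}$ (this count already lives in $\contractedd{G}{T_1}=\contractedd{H}{T_1}$). Hence every bad edge of $\contractedd{G}{T_2}$ versus $\contractedd{H}{T_2}$ is incident to $\mathcal{X}$, and every non-special leaf has the same degree in both reduced graphs. As long as all flips keep $G$ and $H$ agreeing on $T_1$ --- equivalently, keep the $C$-boundary $E_G(C,V\setminus C)$ and all non-$C$ leaf-pair counts intact --- making $\contractedd{G}{T_2}$ and $\contractedd{H}{T_2}$ have the same degree sequence is exactly equalising the two multisets $\{s_G(X):X\in\mathcal{X}\}$ and $\{s_H(X):X\in\mathcal{X}\}$ of degree-deficits (here $s_G(X)=|E_G(X,V\setminus X)|=\deg_{\contractedd{G}{T_2}}(X)$). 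By Lemma~\ref{lem:necessary_cond} both $s_G(X)$ and $s_H(X)$ have the parity of $\sum_{v_i\in X}d_i$, so at each child the two values are congruent modulo $2$ and the $\pm2$ moves below can indeed close the gap.

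Next I would run the following loop, maintaining as a potential the $\ell_1$-distance between the sorted sequences $(s_G(X))_X$ and $(s_H(X))_X$. While the multisets differ there is a child $X$ at which the smaller of $s_G(X),s_H(X)$ should be raised; say $s_G(X)<s_H(X)$, so $|E_G(X,V\setminus X)|<|E_H(X,V\setminus X)|$. Lemma~\ref{lem:GoodBadEdge} then hands me a bad edge $e=pq$ of $G$ with $p,q\in X$ whose removal keeps $G[C']$ connected for every $C'$. I flip $e$ with a second bad edge $rt$ whose endpoints lie in two children of $C$ other than $X$, via $(pq,rt)\to(pr,qt)$: the internal edge $pq$ becomes two boundary edges of $X$, so $s_G(X)$ rises by $2$ while the deficits of the children containing $r$ and $t$ are unchanged and the $C$-boundary is frozen, so $G$ and $H$ keep agreeing on $T_1$. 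A symmetric flip lowers a deficit by turning two between-children bad edges at a child into an inside-child edge, which drops that child's deficit by $2$ (and can drop two deficits at once). Because every edge removed by such a flip is bad, no good edge is ever touched; because the flip only rewires bad edges, $\delta(\contractedd{G}{T_2},\contractedd{H}{T_2})$ never increases; and charging each flip to a bad edge it eliminates yields the bound of $\delta(\contractedd{G}{T_2},\contractedd{H}{T_2})/2$ on the number of flips. Connectivity of all the intermediate $G[C']$ is checked through Lemma~\ref{lem:ContractEquiv} and Remark~\ref{rem:SafeOutside}, exactly as in the proof of Lemma~\ref{lem:GoodBadEdge}, and everything is polynomial since that lemma is algorithmic.

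The hard part is to realise each prescribed $\pm2$ adjustment as one honest flip meeting all the constraints simultaneously. Lemma~\ref{lem:GoodBadEdge} supplies a safely removable bad edge for one side of the flip, but the partner edge $rt$ must itself be bad (otherwise the flip would drop a good edge below its multiplicity in $G\cap H$) and must be positioned so that the reconnection both keeps every $G[C']$ connected and leaves the $C$-boundary unchanged. Guaranteeing the existence of such a partner --- and organising the loop so that the sorted-sequence potential strictly decreases in every configuration, in particular the extremal cases where $\contracted{G}{C}$ and $\contracted{H}{C}$ have different numbers of edges --- is where the genuine case analysis sits; I expect it to mirror the two-flip manoeuvre of Case~2 in the proof of Lemma~\ref{lem:4approxmain}, using an auxiliary bad between-children edge to make room when no direct partner is available.
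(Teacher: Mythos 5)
Your setup is the right one---agreement on $T_1$ forces every degree discrepancy of $\contractedd{G}{T_2}$ versus $\contractedd{H}{T_2}$ onto the special vertices $\mathcal{X}$, Lemma~\ref{lem:GoodBadEdge} supplies a safely removable bad edge inside a deficient child $X$, and the parities from Lemma~\ref{lem:necessary_cond} let $\pm 2$ moves close each gap. But the heart of the lemma is precisely the step you defer as ``the hard part'': exhibiting the partner edge and the reconnection pattern. The one concrete choice you do propose---flip the internal edge $pq$ of $X$ with a bad edge $rt$ joining two \emph{other} children $R,T$ of $C$, via $(pq,rt)\to(pr,qt)$---fails on two counts. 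First, such an edge $rt$ need not exist: $C$ may have only two children, or $\contracted{G}{C}$ may be a star centred at $X$. Second, the claim that rewiring bad edges cannot increase $\delta(\contractedd{G}{T_2},\contractedd{H}{T_2})$ is false for this flip: in the contracted graph the removed edge $pq$ is invisible (both endpoints lie in the leaf $X$ of $T_2$), so the flip deletes only one visible edge $RT$ while creating two visible edges $XR$ and $XT$; if neither $XR$ nor $XT$ lies in $\contractedd{H}{T_2}-\contractedd{G}{T_2}$, the symmetric difference of the contracted graphs strictly increases.

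The paper's proof avoids both problems by aiming the new edges at a witness of the deficiency: since $X$ has too small a degree in $\contractedd{G}{T_2}$, there is an edge $XY$ in $\contractedd{H}{T_2}-\contractedd{G}{T_2}$, and the case analysis is on $Y$. If $Y$ is also deficient, one flips a removable internal edge of $X$ with one of $Y$ (both from Lemma~\ref{lem:GoodBadEdge}) to create two copies of $XY$; otherwise one reroutes an existing $G$-edge $yz$ incident to $Y$ (with $z$ in a special vertex $Z$, whose existence requires the separate subcase where $Y$ itself is not special) together with the internal edge of $X$, creating $XY$ and $XZ$ and deleting $YZ$. Because the created edges always include the deficient edge $XY$, the symmetric difference is controlled and each flip reduces the total degree discrepancy, which yields the $\delta(\contractedd{G}{T_2},\contractedd{H}{T_2})/2$ bound. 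A further slip: you propose to equalise the \emph{sorted} multisets of deficits, but Lemma~\ref{lem:samedegree} needs $\deg_{\contractedd{G}{T_2}}(X)=\deg_{\contractedd{H}{T_2}}(X)$ for each individual labelled vertex $X$, since flips preserve per-vertex degrees; the correct potential is $\sum_{X\in\mathcal{X}}|s_G(X)-s_H(X)|$, not a distance between sorted sequences.
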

\begin{proof}
Let us denote by $V(T_2)$ the vertex set of both $\contractedd{G}{T_2}$ and $\contractedd{H}{T_2}$. Let $X$ be a vertex such that the degree of $X$ in $\contractedd{G}{T_2}$ is smaller than the degree of $X$ in $\contractedd{H}{T_2}$ (the case where the degree of $X$ is larger in $\contractedd{G}{T_2}$ is symmetric). Note that since $G$ and $H$ agree on $T_1$, $X$ must be a special vertex since the degree of a non special vertex is necessarily correct. In particular, there exists an edge $XY$ in $\contractedd{H}{T_2} - \contractedd{G}{T_2}$. We now distinguish two cases. \smallskip

\noindent\textbf{Case 1.} The degree of $Y$ in $\contractedd{G}{T_2}$ is smaller than the degree of $Y$ in $\contractedd{H}{T_2}$.  \\
Lemma~\ref{lem:GoodBadEdge} (applied to respectively $X$ and $Y$) ensures that there exists an edge $ab$ in the graph induced by $X$ and there exists an edge $cd$ in the graph induced by $Y$ such that their deletions do not disconnect any subgraph $G[C]$ for $C \in \mathcal{CC}$. We can perform the flip $(ab,cd)$ in the graph $G$. In the graph $\contractedd{G}{T_2}$, it corresponds to create twice the edge $XY$. Since $XY$ is in $\contractedd{H}{T_2} \Delta\contractedd{G}{T_2}$, the symmetric difference does not increase and since $Y$ is special, $\contractedd{G}{T_1}$ is not modified (the number of edges leaving the set of special vertices remain the same).
\smallskip

\noindent\textbf{Case 2.} The degree of $Y$ in $\contractedd{G}{T_2}$ is at least the degree of $Y$ in $\contractedd{H}{T_2}$.\\
Lemma~\ref{lem:GoodBadEdge} ensures that there exists an edge $ab$ in the graph induced by $X$ whose deletion does not disconnect any subgraph $G[C]$ for $C \in \mathcal{CC}$. First assume that $Y$ is a special vertex. Moreover, since the edge $XY$ is in $\contractedd{H}{T_2}$, there exists a vertex $Z \ne X$ incident to $Y$ in $\contractedd{G}{T_2}$. Let $yz$ be an edge of $G$ between these two sets. We perform the flip $(ab,yz) \rightarrow (ay,bz)$ in $G$. In $\contractedd{G}{T_2}$ is corresponds to create $XY$ and $XZ$ and to delete $YZ$. By definition of $XY$ is an edge of  does not increase the symmetric difference) and decreases the difference of degrees. Moreover $\contracted{G'}{C}$ is connected for every $C$ after the flip. Indeed, the deletion of $ab$ does not impact any set by Lemma~\ref{lem:GoodBadEdge}. Moreover, since both $X$ and $Y$ are special vertices, there always exist a path from $Z$ to $Y$ via special vertices and then the connectivity after the flip is still satisfied.

Assume now that $Y$ is not a special vertex. Since $G$ and $H$ agree on $\contractedd{H}{T_1}$, the number of edges from $Y$ to the extension node $C$ in both $G$ and $H$ is the same. And then the number of edges between $Y$ and special vertices is the same in $\contractedd{H}{T_2}$ and $\contractedd{G}{T_2}$. So there exists an edge $YZ$ where $Z$ is a special vertex. Let $yz$ be a corresponding edge of $G$. Then we perform the flip $(ab,yz) -> (ay,bz)$. As in the other case, one can prove that this flip is correct.
 \end{proof}

 \begin{lemma}\label{lem:samedegree}
  Let $S$ be a degree sequence and $\mathcal{CC}$ be a nested partition. Let $T$ be the tree of the fragments. Let $T_2$ be an extension of a well-structured subtree $T_1$ on extension node $C$. Let $G,H$ be two graphs of $\cG(S,\mathcal{CC})$ that agree on $T_1$ and such that $\contractedd{G}{T_2}$ and $\contractedd{H}{T_2}$ have the same degree sequence.

  We can find in polynomial time a sequence of at most $\delta(\contractedd{G}{T_2},\contractedd{H}{T_2})$ special flips transforming $\contractedd{G}{T_2}$ into $\contractedd{H}{T_2}$ which only flips bad edges.
 \end{lemma}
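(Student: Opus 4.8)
The goal of Lemma~\ref{lem:samedegree} is to transform $\contractedd{G}{T_2}$ into $\contractedd{H}{T_2}$ using at most $\delta(\contractedd{G}{T_2},\contractedd{H}{T_2})$ special flips, touching only bad edges. We are in the favorable situation where the two graphs already have the same degree sequence, and we have two powerful tools at our disposal: Lemma~\ref{lem:conditionflip}, which certifies that any special flip is automatically $T_2$-correct, and the $4$-approximation machinery of Section~\ref{sec:approx-classique}. Let me sketch how I would combine them.

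**The plan.**

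The key observation is that a special flip is, by definition, a flip involving the set of special vertices $\mathcal{X}$ that maintains connectivity of $\contractedd{G}{T_2}[\mathcal{X}]$ and creates no new edge in $\contractedd{G}{T_2} \setminus \mathcal{X}$. Since $G$ and $H$ agree on $T_1$, the only bad edges of $\contractedd{G}{T_2}$ are those incident to at least one special vertex of $\mathcal{X}$: any edge wholly outside $\mathcal{X}$ corresponds to an edge already fixed by the agreement on $T_1$, hence is good. So the bad edges live inside $\contractedd{G}{T_2}[\mathcal{X}]$ or straddle the boundary of $\mathcal{X}$. First I would argue that the boundary edges (those from $\mathcal{X}$ to the rest) are already correct: because $G$ and $H$ agree on $T_1$, the number of edges from each non-special vertex into the extension node $C$ is the same in both graphs, so the multiset of edges leaving $\mathcal{X}$ is identical in $\contractedd{G}{T_2}$ and $\contractedd{H}{T_2}$. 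Consequently \emph{all} the bad edges are internal to $\contractedd{G}{T_2}[\mathcal{X}]$, and moreover $\contractedd{G}{T_2}[\mathcal{X}]$ and $\contractedd{H}{T_2}[\mathcal{X}]$ have the same degree sequence (the degrees inside $\mathcal{X}$ equal the full degrees in $\contractedd{\cdot}{T_2}$ minus the identical boundary contributions).

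**Reducing to Theorem~\ref{thm:connected}.**

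This is exactly the setup of the connected reconfiguration problem on the single subgraph $\contractedd{G}{T_2}[\mathcal{X}]$, which is connected because $C \in \mathcal{CC}$ forces $G[C]$ connected and hence (by Lemma~\ref{lem:ContractEquiv}) the contracted graph on $\mathcal{X}$ is connected. So I would apply Theorem~\ref{thm:connected} to the pair $\contractedd{G}{T_2}[\mathcal{X}]$ and $\contractedd{H}{T_2}[\mathcal{X}]$: it yields a connectivity-maintaining sequence of at most $\delta(\contractedd{G}{T_2}[\mathcal{X}], \contractedd{H}{T_2}[\mathcal{X}])$ flips transforming one into the other, and crucially Theorem~\ref{thm:connected} guarantees that this sequence never flips any edge already common to both graphs, i.e. it only touches bad edges. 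Each such flip, when read back in $\contractedd{G}{T_2}$, has both edges incident to $\mathcal{X}$, maintains the connectivity of $\contractedd{G}{T_2}[\mathcal{X}]$, and creates no edge outside $\mathcal{X}$ (since the boundary edges stay fixed throughout), so it is a special flip. Since $\delta(\contractedd{G}{T_2}[\mathcal{X}], \contractedd{H}{T_2}[\mathcal{X}]) = \delta(\contractedd{G}{T_2},\contractedd{H}{T_2})$ — all bad edges being internal to $\mathcal{X}$ — the total count is at most $\delta(\contractedd{G}{T_2},\contractedd{H}{T_2})$ special flips, as required.

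**The main obstacle.**

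The delicate point — and the one I would spend the most care on — is verifying that every flip produced by Theorem~\ref{thm:connected} genuinely qualifies as a special flip at each intermediate step, not merely at the endpoints. This requires that throughout the whole sequence no edge is ever created outside $\mathcal{X}$ and the boundary edges remain untouched. Since Theorem~\ref{thm:connected} operates entirely within the graph $\contractedd{G}{T_2}[\mathcal{X}]$, whose vertex set is precisely $\mathcal{X}$, every flip it performs has all four endpoints in $\mathcal{X}$, so a fortiori no edge leaving $\mathcal{X}$ is ever created or destroyed; this is what keeps $\contractedd{G}{T_1}$ invariant and keeps each flip special. The second subtlety is confirming that the graph stays in a state where the boundary edges can be ignored: because the flips live strictly inside $\mathcal{X}$, the degree of each special vertex \emph{within} $\mathcal{X}$ is preserved, and the external edges play no role, so the hypotheses of Theorem~\ref{thm:connected} remain satisfied at the start and the single invocation suffices. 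I would close by invoking Lemma~\ref{lem:conditionflip} to record that each special flip is $T_2$-correct, which is what lets the sequence be lifted to $G$ via Lemma~\ref{lem:wellstructured1}.
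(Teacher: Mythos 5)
There is a genuine gap at the heart of your plan: the claim that, because $G$ and $H$ agree on $T_1$, ``the multiset of edges leaving $\mathcal{X}$ is identical in $\contractedd{G}{T_2}$ and $\contractedd{H}{T_2}$'' and hence all bad edges are internal to $\mathcal{X}$. Agreement on $T_1$ only fixes, for each non-special vertex $W$, the \emph{total} number of edges between $W$ and the extension node $C$; it says nothing about how those edges are distributed among the children of $C$ once $C$ is split into the special vertices. For example, with $\mathcal{X}=\{X_1,X_2\}$ the graph $G$ may route an edge from $W$ to $X_1$ while $H$ routes it to $X_2$; then $WX_1$ and $WX_2$ are bad boundary edges even though both graphs agree on $T_1$ and have the same degree sequence on $T_2$ (the degree discrepancy at $X_1,X_2$ can be compensated inside $\mathcal{X}$). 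Consequently your reduction to Theorem~\ref{thm:connected} on the induced subgraph $\contractedd{G}{T_2}[\mathcal{X}]$ alone cannot succeed: after it terminates the two graphs would still differ on the boundary edges, and those cannot be repaired by flips with all four endpoints in $\mathcal{X}$. Moreover $\contractedd{G}{T_2}[\mathcal{X}]$ and $\contractedd{H}{T_2}[\mathcal{X}]$ need not even have the same degree sequence in this situation, so the hypothesis of Theorem~\ref{thm:connected} may fail for the pair you feed it.

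This is precisely the difficulty the paper's proof is engineered to handle: it builds auxiliary graphs $G'$ and $H'$ on $\mathcal{X}$ \emph{plus} one pendant degree-one vertex per boundary edge, attached for each non-special $W$ to the actual neighbours of $W$ inside $\mathcal{X}$ in $\contractedd{G}{T_2}$ and $\contractedd{H}{T_2}$ respectively. Applying Theorem~\ref{thm:connected} to $G'$ and $H'$ then reroutes the boundary edges as well as the internal ones, while the degree-one pendants force every intermediate graph to keep $\mathcal{X}$ connected and guarantee that every flip has at least one endpoint in $\mathcal{X}$ and creates no edge outside $\mathcal{X}$, i.e.\ is special. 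Your remaining observations (connectivity of $\contractedd{G}{T_2}[\mathcal{X}]$ via Lemma~\ref{lem:ContractEquiv}, only bad edges being flipped, and invoking Lemma~\ref{lem:conditionflip} for $T_2$-correctness) are sound, but the argument does not go through without the pendant-vertex device or some equivalent mechanism for relocating the boundary edges.
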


 Note that Lemma~\ref{lem:conditionflip} then ensures that this sequence of special flips is a sequence of correct flips.

\begin{proof}
Let $ \mathcal{X}$ be the set of special vertices of $T_2$. Let us denote by $\delta(\mathcal{X})$ the number of edges between $\mathcal{X}$ and its complement in $\contractedd{G}{T_2}$. Note that $\delta(\mathcal{X})$ is the same for $G$ and $H$ since they agree on $T_1$. Let us first create the following graphs $G'$ and $H'$ defined on the same set of vertices with the same degree sequence.

The vertex set of these graph is $\mathcal{X}$ plus $\delta_G(\mathcal{X})$ vertices.
Restricted to the set $\mathcal{X}$, the graphs $G'$ and $H'$ respectively induce the graphs $\contracted{G}{C}$ and $\contracted{H}{C}$.


Let us denote by $\ell$ the number of edges between $\mathcal{X}$ and its complement in both $G$ and $H$. Note that this value is the same since by assumption $G$ and $H$ agree on $T_2$ and $\mathcal{X}$ is the set of special vertices. Now we create $\ell$ new vertices which will be of degree one in both $G'$ and $H'$. Let $W$ in $V(\contractedd{G}{T_2}) \setminus \mathcal{X}$. Since $G$ and $H$ agree on $T_2$, the number of edges between $W$ and $\mathcal{X}$ is the same in both $\contractedd{G}{T_2}$ and $\contractedd{H}{T_2}$. Let $\ell'$ this value. We create $\ell'$ vertices $W_1,\ldots,W_{\ell'}$ of degree one in $G'$ and $H'$. These vertices are connected to respectively neighbors of $W$ in $\contractedd{G}{T_2}$ and $\contractedd{H}{T_2}$.

We claim that $G'$ and $H'$ have the same degree sequence. Indeed, since there are as many edges incident to $W$ in both $\contractedd{G}{T_2}$ and $\contractedd{H}{T_2}$ for every $W$, the number of pending edges is the same in both graphs. Since we have added to $X$ in $\mathcal{X}$ exactly as many pending edges as neighbors of $X$ in $V(\contractedd{G}{T_2}) \setminus X$, the vertices of $\mathcal{X}$ in $G'$ and $H'$ are note modified. Since they were initially the same in $\contractedd{G}{T_2}$ and $\contractedd{H}{T_2}$, they are still the same in $G'$ and $H'$. So $G'$ and $H'$ are two graphs with the same degree sequence.

Note that both graphs are connected and also remark that, since all the vertices of $G'$ and $H'$ but the ones of $\mathcal{X}$ have degree one, any connected graph $K$ with the same degree sequence satisfies that $K[\mathcal{X}]$ is connected.

By Theorem~\ref{thm:connected}, it is possible to transform $G'$ into $H'$ via a sequence of at most $2\delta(\contractedd{G}{T_2},\contractedd{H}{T_2})$ flips maintaining the connectivity of the graph. Since pending vertices are associated with vertices of $\contractedd{G}{T_2} \setminus \mathcal{X}$ (and $\contractedd{H}{T_2} \setminus \mathcal{X}$), this sequence of flips can indeed by transformed into a sequence of flips in $\contractedd{G}{T_2}$.
This sequence of flips can be adapted  transforms $\contractedd{G}{T_2}$ and $\contractedd{H}{T_2}$. Since the intermediate graphs remains connected if and only if $\mathcal{X}$ induces a connected subgraph and no two vertices of degree one are linked by an edge, the corresponding sequence of flips satisfies that, for any intermediate graph $D$, $\contractedd{D}{T_2}[\mathcal{X}]$ is connected. Moreover all the edges flipped have at least one endpoint in $\mathcal{X}$ and no edge is created in $V(\contractedd{D}{T_2}) \setminus \mathcal{X}$. Thus all the flips of the sequence are special flips. Thus by Lemma~\ref{lem:conditionflip}, the sequence of flips is $T_2$-correct.

Finally, Theorem~\ref{thm:connected} ensures that there exists a transformation from $\contractedd{G}{T_2}$ to $\contractedd{H}{T_2}$ that only flips edges in the symmetric difference. This plus Lemma~\ref{lem:wellstructured1} guarantees that there exists a transformation that never modify good edges during the transformation from $\contractedd{G}{T_2}$ to $\contractedd{H}{T_2}$, which completes the proof of the lemma.
\end{proof}

Let us make the following remark:
\begin{rem}
The algorithms proposed in Lemmas~\ref{lem:samedegree} and~\ref{lem:diffdegree} never create an edge with both endpoint in $V(\contractedd{G}{T_2})\ S$ where $S$ is the set of special vertices.
\end{rem}
 For Lemma~\ref{lem:samedegree} the proof is immediate since flips are special. The choice of edges to flip in Lemma~\ref{lem:diffdegree} ensures that the conclusion holds. Indeed it holds by construction in Case 2 and holds in Case 1 since $Y$ must be a special vertex (since the degree of a vertex is the same in $\contractedd{G}{T_2}$ and $\contractedd{H}{T_2}$ for vertices which are not special).

\paragraph{The algorithm.}
Let us now present the algorithm to compute a sequence of correct flips that transform $G$ into $H$:

 \begin{algorithm}[h!]
\caption{Find a sequence of flip that transforms $G$ into $H$}
\label{alg:calculw}
\begin{algorithmic}[1]
  \STATE Compute the tree of the fragments $T$ rooted at $r$.
  \STATE $T'\leftarrow $ Root of $T$.
  \STATE $G_1 \leftarrow G, H_1 \leftarrow H$.
  \WHILE{$T' \neq T$}
  \STATE Let $C$ be a leaf of $T'$ which is an internal node of $T$.
  \STATE Let $T"$ be well-structured subtree inherited from $T'$ on extension node $C$.
  \STATE\label{algo1} Transform $G_1$ and $H_1$ into $G_2$ and $H_2$ via a sequence of $T"$-correct flips in such a way they have the same degree sequence on $V(\contractedd{G}{T"})$ using Lemma~\ref{lem:diffdegree}.
  \STATE\label{algo2} Transform $G_2$ into $G_3$ via a sequence of $T"$-correct flips in such a way the two graphs $G_3$ and $H_2$ agree on $T"$ using Lemma~\ref{lem:samedegree}.
  \STATE $G_1 \leftarrow G_3, H_1 \leftarrow H_2, T' \leftarrow T"$.
  \ENDWHILE
\end{algorithmic}
\label{alg1}
\end{algorithm}

We have all the ingredients that guarantee that the algorithm works properly. Indeed, Lemma~\ref{lem:diffdegree} ensures that it is possible to transform the graph in such a way all the connectivity constraints are still satisfied and the two graphs have the same degree sequence on $T"$. Moreover this sequence can be found in polynomial time. So the step of line~\ref{algo1} can be performed in polynomial time. Moreover, Lemma~\ref{lem:wellstructured1} ensures that the transformation using $T"$-correct flips that transform $G_2$ into $G_3$ also is correct in the whole graph. And Lemma~\ref{lem:samedegree} ensures that this sequence exists. So the step of line~\ref{algo2} can be performed (in polynomial time).
When the algorithm stops, the graph $G_1$ and $H_1$ agree on $T$ and then the two graphs are the same.

\begin{theorem}
  Algorithm~\ref{alg1} provides a sequence of flips of length at most $(2d+1) \delta(G,H)$ transforming $G$ into $H$ in $\cG(S,\mathcal{CC})$. In particular, it provides a $(8d+4)$-approximation algorithm of the shortest sequence.
 \end{theorem}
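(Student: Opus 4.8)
The plan is to analyze Algorithm~\ref{alg1} by tracking how many flips are performed at each iteration of the while loop and how the symmetric difference evolves as the well-structured subtree $T'$ grows from the root down to all of $T$. First I would observe that the algorithm terminates correctly: when the loop exits we have $T'=T$, so $G_1$ and $H_1$ agree on $T$, which by the remark following the definition of agreement forces $G_1=H_1$; hence the concatenation of all the flips produced really does transform $G$ into $H$, and every flip along the way is correct by Lemmas~\ref{lem:wellstructured1}, \ref{lem:conditionflip}, \ref{lem:diffdegree}, and~\ref{lem:samedegree}. The nontrivial content is therefore entirely the length bound.

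Next I would bound the number of flips contributed by a single iteration handling extension from $T'$ to $T''$ on extension node $C$. By Lemma~\ref{lem:diffdegree}, line~\ref{algo1} uses at most $\delta(\contractedd{G_1}{T''},\contractedd{H_1}{T''})/2$ flips and does not increase this symmetric difference; by Lemma~\ref{lem:samedegree}, line~\ref{algo2} uses at most $\delta(\contractedd{G_2}{T''},\contractedd{H_2}{T''}) \le \delta(\contractedd{G_1}{T''},\contractedd{H_1}{T''})$ further flips. So each iteration costs at most $\tfrac{3}{2}\delta_i$ flips, where $\delta_i := \delta(\contractedd{G_1}{T''},\contractedd{H_1}{T''})$ is the symmetric difference of the inherited graphs at the moment of that iteration. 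The key structural point I would then exploit is that these inherited symmetric differences are controlled by $\delta(G,H)$: since $\contractedd{\cdot}{T''}$ is obtained by contracting, an edge of $G\Delta H$ contributes to $\delta(\contractedd{G}{T''},\contractedd{H}{T''})$ only through the (at most one) level of $T''$ at which its two endpoints first separate, and across the whole run each iteration operates at a strictly deeper level. Because the height of the tree of the fragments is $d$, the sum $\sum_i \delta_i$ telescopes so that every original edge of the symmetric difference is counted a bounded number of times per level; combined with the fact that neither Lemma preserving-step increases the relevant symmetric difference and no good edge is ever flipped, one gets $\sum_i \delta_i \le 2d\cdot\delta(G,H)$, and hence a total flip count bounded by $(2d+1)\delta(G,H)$.

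The main obstacle I anticipate is making the bookkeeping of the second paragraph rigorous: one must argue that after the flips of iteration $i$ the two graphs agree on $T''$, so that at subsequent iterations the symmetric difference $\delta_j$ is supported only on edges whose endpoints already agreed at the coarser levels, preventing double-counting across the $d$ levels. Concretely I would prove by induction on the depth that once $G_1$ and $H_1$ agree on $T'$, the inherited symmetric difference at the next extension is at most the number of edges of $G\Delta H$ that are cut at the newly revealed level, and that the sum of these over all extensions telescopes into $\delta(G,H)$ counted at most $2d$ times (the factor $2$ from the $\tfrac32$-per-iteration cost and the bound $\delta_i \le 2\,\#\{\text{cut edges at level }i\}$). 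Finally, for the approximation guarantee I would invoke the paper's standing observation that any correct transformation has length at least $\delta(G,H)/4$, since a single flip changes the multiset of edges in at most four places; dividing the upper bound $(2d+1)\delta(G,H)$ by the lower bound $\delta(G,H)/4 \le OPT$ yields the claimed $(8d+4)$-approximation, completing the proof of the theorem.
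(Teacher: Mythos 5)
Your proposal follows the same route as the paper --- per-iteration cost $\tfrac32\delta_i$ from Lemmas~\ref{lem:diffdegree} and~\ref{lem:samedegree}, a charging argument showing each edge of the symmetric difference is relevant at most $2d$ times because flips only touch edges incident to special vertices and each endpoint lies in a special vertex at most $d$ times, and the lower bound $OPT\ge\delta(G,H)/4$ --- but the final arithmetic does not close. From ``each iteration costs at most $\tfrac32\delta_i$'' and ``$\sum_i\delta_i\le 2d\,\delta(G,H)$'' you can only conclude a total of $\tfrac32\cdot 2d\,\delta(G,H)=3d\,\delta(G,H)$ flips, which exceeds $(2d+1)\delta(G,H)$ for every $d\ge 2$ and would only yield a $12d$-approximation; the ``hence'' in your second paragraph is a non sequitur, and the attempted patch in your last paragraph (attributing a factor $2$ to the $\tfrac32$-per-iteration cost) does not repair it.

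The missing idea is a \emph{separate, global} accounting for the degree-equalization flips of Lemma~\ref{lem:diffdegree}. The paper observes that whenever an edge of the symmetric difference is consumed by that lemma, it is turned into an edge crossing between children of the extension node, i.e.\ an edge of $\contractedd{G}{T'}$ for the current subtree $T'$; such an edge never again lies inside a fragment $C\notin\mathcal{CC}_{T''}$ for any later extension $T''$, so it can be charged by Lemma~\ref{lem:diffdegree} at most once over the entire run. Consequently the total cost of all applications of Lemma~\ref{lem:diffdegree} is bounded by $\delta(G,H)$ independently of $d$, and only the Lemma~\ref{lem:samedegree} flips need the $2d$-fold counting, giving $2d\,\delta(G,H)+\delta(G,H)=(2d+1)\delta(G,H)$ and hence the $(8d+4)$ ratio. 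Without this split your argument proves the theorem only with weaker constants.
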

\begin{proof}
Let $T_2$ be an extension of $T_1$ and $G,H$ be two graphs $T_1$-correct. At each step of the algorithm, we need $\delta(\contractedd{G}{T_2},\contractedd{H}{T_2})/2$ flips to equilibrate the degrees by Lemma~\ref{lem:diffdegree} and $\delta(\contractedd{G}{T_2},\contractedd{H}{T_2})$ flips to transform $\contractedd{G}{T_2}$ into $\contractedd{H}{T_2}$ by Lemma~\ref{lem:samedegree}. So in total to transform a graph that is $T_1$-correct into a graph that is $T_2$-correct, one needs $3/2 \delta(\contractedd{G}{T_2},\contractedd{H}{T_2})$ flips.

Let us now count how many times an edge of the symmetric difference can appear in the symmetric difference of $\delta(\contractedd{G}{T'},\contractedd{H}{T'})$ for some well-defined subtree $T'$. In the proofs of Lemma~\ref{lem:diffdegree} and Lemma~\ref{lem:samedegree}, we only modify edges incident to at least one special vertex. So if an edge $e$ is modified, then one of its endpoint is a special vertex. Moreover, at the end of the algorithm given by Lemma~\ref{lem:samedegree}, the edge on which $e$ is flipped is an edge with at least one endpoint in the set of special vertices. So later on the algorithm, it can only be modified again when one of its endpoint become a special vertex. Since the depth of the tree $T$ is at most $d$, an edge can have an endpoint on a special vertex only at most $2d$ times. So in total any edge of $G\Delta H$ appears in at most $2d$ graphs $\contractedd{G}{T'}\Delta\contractedd{H}{T'}$ for some well-defined subtree $T'$.

Finally, there exists a transformation from $G$ to $H$ in $\cG(S,\mathcal{CC})$ of length at most $3d \delta(G,H)$. \smallskip

This bound can be slightly improved. Indeed one can notice that if some edge $e$ is used in the proof of Lemma~\ref{lem:diffdegree}, then the edge becomes an edge of $\contractedd{G}{T'}$ for the current tree $T'$. And this edge will never become an edge of $G[C]$ for $C \notin \mathcal{CC}_{T''}$ for some future extension $T''$ of $T'$. So any edge of the symmetric difference can actually be considered at most once in this lemma. So in total, the number of edges modified by the application of Lemma~\ref{lem:diffdegree} for \textit{all} the well-defined tree $T'$ is at most $\delta(G,H)$. Using this fact, we obtain a transformation of length at most  $(2d+1) \delta(G,H)$.

Since an optimal solution needs at least $\delta(G,H)/4$ flips, our algorithm provides a $(8d+4)$-approximation algorithm.
\end{proof}

\section{Conclusion and open problems}
In Section~\ref{sec:approx-classique}, we provide a $4$-approximation algorithm to transform a connected multigraph into another. In this paper, we were simply interested in the existence of a constant approximation algorithm in order to obtain an approximation algorithm for the generalized problem. We did not make any attempts to optimize our bound. It is likely that a more careful analysis provides a better approximation ratio. In particular, if one can prove that there always exists a flip that creates a good edge (without breaking ones) that does not disconnect the graph, then we would immediately obtain a $2$-approximation algorithm.
Recently, Bereg and Ito~\cite{Bereg17} provide a $3/2$-approximation algorithm to transform a multigraph into another based on the symmetric circuit partition, beating the trivial $2$-approximation algorithm in that case. A similar technique might be useful for significantly improve the approximation ratio for connected graphs.


The approximation ratio of Theorem~\ref{thm:main} depends on the depth of the tree of the fragment. Can we avoid this dependency and simply find an approximation algorithm that does not depend of the depth? Note that we did not try to optimize the constant in front of the linear function of $d$ in the approximation ratio that is probably easily improvable.

\begin{figure}
 \centering
 \includegraphics[scale=.6]{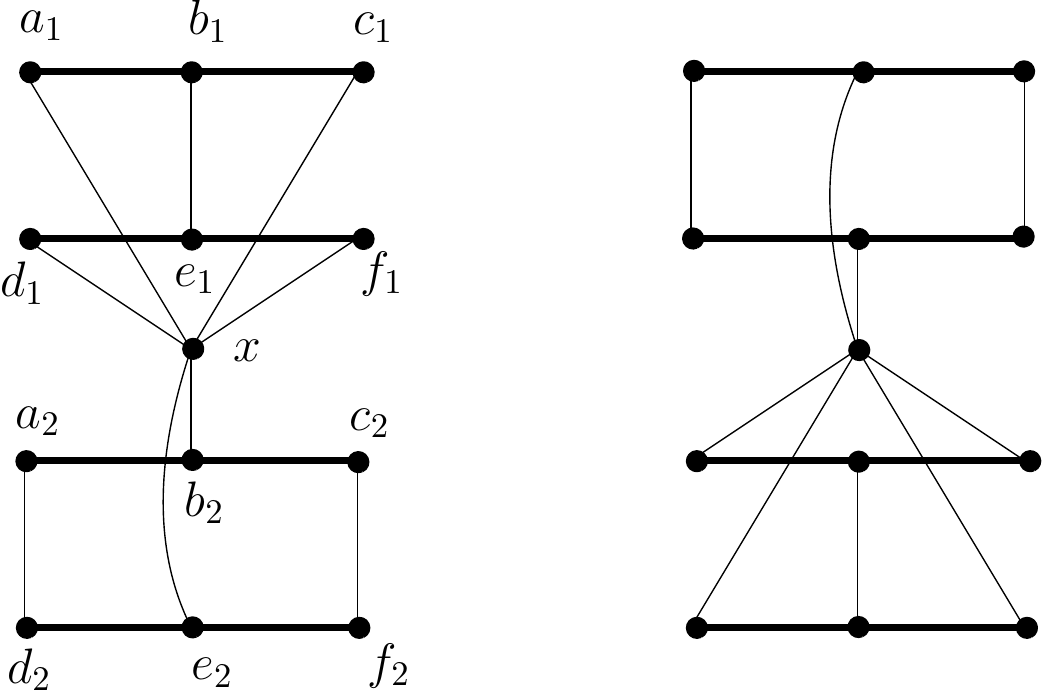}
 \caption{It is impossible to transform the left graph into the right graph when the set of subsets that have to induce connected subgraphs is the set of thick edges plus the sets  $\{a_1,b_1,d_1,x\}, \{a_1,b_1,e_1,x\}, \{b_1,c_1,f_1,x \},$ $\{b_1,c_1,e_1,x \}, \{a_2,b_2,d_2,x\}, \{a_2,b_2,e_2,x\},$ $\{b_2,c_2,f_2,x \}$ and $\{b_2,c_2,e_2,x \}$.
 }
 \label{fig:counter}
\end{figure}

In practice, instead of one tree of the fragments, we are often given a collection of trees of the fragments instead of one. It means  that subsets of vertices that have to be connected might intersect and not be contained one in the other. One can wonder if it is still true that the reconfiguration graph is connected in this setting? Unfortunately the answer to this question is negative, for instance in the graph provided in Figure~\ref{fig:counter}.

If yes, does it always exist a transformation that is linear in the size of the symmetric difference and can we approximate it in polynomial time?

\paragraph{Acknowledgments.} The authors want to thank the anonymous reviewers of WAOA for their careful reading of the paper which permits to significantly improve its quality.

\bibliographystyle{abbrv}
\bibliography{bib.bib}

\clearpage

\appendix

\section{NP-Hardness}\label{sec:NP-Hardness}


In this section we show that computing the distance
$dist_{\cG(S,\mathscr{C})}(G,H)$ or $dist_{\cG(S)}(G,H)$
between two graphs $G$ and $H$ is NP-complete where $dist_{\cG(S,\mathscr{C})}(G,H)$ and $dist_{\cG(S)}(G,H)$ correspond to the shortest sequence of flips needed to transform $G$ into $H$ in respectively $\cG(S,\mathscr{C})$ and $\cG(S)$. In \cite{Will99}, Will
proved that computing the distance
$dist_{\cG(S,\mathscr{S})}(G,H)$ between two simple graphs is
NP-complete. To prove it , he shows that the distance depends on the maximum size of
a \emph{symmetric circuit partition} of $G \Delta H$ and proves that computing the maximum size of symmetric circuit partition is NP-complete.

A \emph{circuit} is a close walk which passes through any edge at most
once. A  \emph{symmetric circuit} in $G \Delta H$ is a circuit whose edges
alternate between $G − H$ and $H − G$. A \emph{symmetric circuit partition} of $G \Delta H$ is a set of pairwise edge-disjoint
symmetric circuits using all the edges of $G \Delta H$. We denote by $m(G,H)$
the maximum size of symmetric circuit partition of $G \Delta H$.

\begin{theorem}[\cite{Will99}]\label{theorem:will}
    Any pair of simple graphs $G$ and $H$ realizing the same degree
    sequence satisfies
    \[
        dist_{\cG(S,\mathscr{S})}(G,H)=\frac{\delta(G,H)}{2}
        - m(G,H)
    \]
\end{theorem}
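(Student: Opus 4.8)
The plan is to establish the two inequalities separately, exploiting the alternating structure of $G\Delta H$. Throughout, colour the edges of $G-H$ \emph{red} and those of $H-G$ \emph{blue}. Since $G$ and $H$ realize the same degree sequence, every vertex meets as many red as blue edges, so $G\Delta H$ decomposes into symmetric circuits and $\delta(G,H)/2$ is exactly the number of red edges. If a symmetric circuit partition consists of circuits $Z_1,\dots,Z_\ell$ with $Z_i$ carrying $k_i$ red (hence $k_i$ blue) edges, then $\sum_i k_i=\delta(G,H)/2$. I would first prove the upper bound $dist_{\cG(S,\mathscr{S})}(G,H)\le \delta(G,H)/2-m(G,H)$ by exhibiting a transformation built from a maximum symmetric circuit partition $Z_1,\dots,Z_{m(G,H)}$. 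The core claim is that a single symmetric circuit with $k$ red edges can be \emph{resolved} (its red edges deleted and its blue edges created) in exactly $k-1$ simplicity-preserving flips. Writing the circuit as $v_0\,v_1\,\cdots\,v_{2k-1}\,v_0$ with $v_{2i}v_{2i+1}$ red and $v_{2i+1}v_{2i+2}$ blue, the flip $(v_0v_1,v_2v_3)\rightarrow(v_1v_2,v_0v_3)$ removes two red edges, installs the blue edge $v_1v_2$, and leaves a symmetric circuit on $v_0,v_3,v_4,\dots,v_{2k-1}$ with $k-1$ red edges. Iterating gives $k-1$ flips per circuit, and summing over the partition yields $\sum_i(k_i-1)=\delta(G,H)/2-m(G,H)$ flips.

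Next I would prove the matching lower bound $dist_{\cG(S,\mathscr{S})}(G,H)\ge \delta(G,H)/2-m(G,H)$ by a potential argument. For a simple graph $G'$ realizing $S$, set $\Psi(G')=\delta(G',H)/2-m(G',H)$, so that $\Psi(G)=\delta(G,H)/2-m(G,H)$ and $\Psi(H)=0$. It suffices to show that a single flip decreases $\Psi$ by at most one: if a transformation has length $t$, telescoping gives $0=\Psi(H)\ge \Psi(G)-t$, hence $t\ge \delta(G,H)/2-m(G,H)$.

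The heart of the lower bound is the local analysis of one flip $(ab,cd)\rightarrow(ac,bd)$. Such a flip changes $G'\Delta H$ only on the four edges $ab,cd,ac,bd$, which form a $4$-cycle $a$–$b$–$d$–$c$–$a$, and it swaps the perfect matching $\{ab,cd\}$ for $\{ac,bd\}$. Letting $p$ be the number of red edges among $\{ab,cd\}$ and $q$ the number of non-edges of $H$ among $\{ac,bd\}$, the red count changes by $q-p$, so $\delta(\cdot,H)/2$ changes by $q-p$ and $\Psi'-\Psi''=(p-q)+\bigl(m''-m'\bigr)$, where primes refer to the pre-flip graph and double primes to the post-flip graph. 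Thus $\Psi'-\Psi''\le 1$ is equivalent to $m''-m'\le 1-p+q$, i.e.\ one flip cannot raise the maximum circuit number by too much. I would prove this by surgery: starting from a maximum symmetric circuit partition of $G''\Delta H$, delete from it the at most $(2-p)+q$ affected edges (splitting the circuits through them) and reinsert the complementary $p+(2-q)$ edges of $G'\Delta H$, checking that the resulting partition of $G'\Delta H$ has at least $m''-(1-p+q)$ circuits.

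I expect the main obstacle to lie in this coupled accounting in the lower bound, namely controlling the change of $m$ under a single flip across all the parity cases for $(p,q)$, together with the simplicity bookkeeping in the upper bound: the shortcut edge $v_0v_3$ created when resolving a circuit must not already be present in the current graph, and the resolution flips of distinct circuits must not conflict. Establishing that a conflict-free choice of flip (and of diagonal) always exists — or reducing to the case where no such conflict arises — is the delicate technical point; once it is settled, both bounds combine to give $dist_{\cG(S,\mathscr{S})}(G,H)=\delta(G,H)/2-m(G,H)$.
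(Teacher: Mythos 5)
Your proposal follows essentially the same route as the paper, which itself only cites Will's argument: the upper bound by resolving each circuit of a maximum symmetric circuit partition with $k_i-1$ simplicity-preserving flips, and the lower bound via the potential lemma that a single flip decreases $\psi(\cdot,H)=\delta(\cdot,H)/2-m(\cdot,H)$ by at most one. The technical points you flag as delicate (the shortcut edge $v_0v_3$ possibly already being present, and the full case analysis of how $m$ changes under one flip) are precisely the parts the paper delegates to \cite{Will99}, so your outline matches the paper's treatment.
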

Let us denote $\psi(G,H):=\frac{\delta(G,H)}{2}
- m(G,H)$.
It was already noticed in \cite{Majcher87} that
$dist_{\cG(S,\mathscr{S})}(G,H)\leq \psi(G,H)$. Theorem \ref{theorem:will} was rediscovered in
\cite{Bereg17} and generalized in \cite{erdos13}.

A simple rewriting of the proof in \cite{Will99} described below permits to prove the following:

\begin{theorem}\label{theorem:distmulti}
    Any pair of loop-free (multi)graphs $G$ and $H$ realizing the same degree
    sequence satisfies
    \[
        dist_{\cG(S)}(G,H)=\psi(G,H)    \]
\end{theorem}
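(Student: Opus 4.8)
The plan is to establish the two inequalities $dist_{\cG(S)}(G,H) \le \psi(G,H)$ and $dist_{\cG(S)}(G,H) \ge \psi(G,H)$, following the argument of Will~\cite{Will99} for simple graphs (Theorem~\ref{theorem:will}) and rewriting it for loop-free multigraphs. The starting observation is that, since $G$ and $H$ realize the same degree sequence, the multigraph $G \Delta H$ is \emph{balanced}: at every vertex the number of incident edges of $G-H$ equals the number of incident edges of $H-G$. Hence $G \Delta H$ admits symmetric circuit partitions, and if one such partition consists of circuits carrying $k_1,\ldots,k_m$ edges of $G-H$ respectively (so the same number of edges of $H-G$), then $\sum_i k_i = \delta(G,H)/2$ and the partition has $m$ circuits. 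Choosing a \emph{maximum} symmetric circuit partition, we obtain the key identity $\psi(G,H) = \frac{\delta(G,H)}{2} - m(G,H) = \sum_{i}(k_i - 1)$, which governs both bounds.

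For the lower bound I would argue that a single flip changes the quantity $\psi(\cdot,H)$ by at most one. A flip modifies exactly the multiplicities of the four edges it involves, so it alters $G\Delta H$ only locally; tracking how a maximum symmetric circuit partition can be repaired after such a local change shows that $\delta(\cdot,H)/2$ and $m(\cdot,H)$ move in tandem, so that $|\psi(G',H)-\psi(G,H)| \le 1$ whenever $G'$ is obtained from $G$ by one flip. This is precisely Will's counting argument, and it relies only on the balance of the alternating multigraph and on a flip touching four edges; neither fact is sensitive to the presence of parallel edges, so the argument transfers verbatim. Since $\psi(H,H)=0$, transforming $G$ into $H$ requires at least $\psi(G,H)$ flips.

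For the upper bound I would resolve each circuit of a maximum symmetric circuit partition independently. A circuit of length $2k$ is an alternating closed walk $v_0 v_1 \cdots v_{2k-1} v_0$ with $v_{2i}v_{2i+1} \in G-H$ and $v_{2i+1}v_{2i+2} \in H-G$; the flip $(v_0v_1,v_2v_3)\rightarrow(v_1v_2,v_0v_3)$ creates the desired $H-G$ edge $v_1v_2$ and leaves a shorter alternating circuit $v_0 v_3 v_4 \cdots v_{2k-1} v_0$ of length $2(k-1)$. Iterating gives $k-1$ flips per circuit, hence $\sum_i (k_i-1)=\psi(G,H)$ flips in total, and every intermediate graph still realizes $S$. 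The one point requiring care, and the main obstacle, is loop-freeness: the leftover edge $v_0v_3$ must not be a loop, i.e.\ we must never be forced to merge two $G-H$ edges sharing a common endpoint. Here the multigraph setting is actually a \emph{relaxation} of Will's: parallel edges are now permitted, so the only forbidden configuration is a loop, and Will's own intermediate graphs (being simple) are already loop-free. The degenerate situations are ruled out by the fact that the $H-G$ edges we aim to create lie in the loop-free graph $H$ (this already settles every circuit of length four, where $v_0v_3$ would otherwise force the blue edge $v_3v_0$ to be a loop), together with the freedom to choose which $H-G$ edge of the current circuit to realize first; a short case analysis, mirroring Will's treatment of the forbidden multi-edges, shows that a loop-free flip creating a missing $H-G$ edge always exists. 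This simple rewriting of the proof of Theorem~\ref{theorem:will} yields Theorem~\ref{theorem:distmulti}.
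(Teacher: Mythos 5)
Your proposal is correct and follows essentially the same route as the paper, which likewise obtains the upper bound from Will's circuit-by-circuit flip sequence (noting it remains valid for multigraphs, with loop-freeness the only point needing care) and the lower bound from the Will/Bereg--Ito lemma that a single flip decreases $\psi(\cdot,H)$ by at most one. The paper's proof is essentially a citation of these two facts together with the remark that they transfer verbatim to loop-free multigraphs; you have simply written out the same argument in more detail.
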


To prove that  $dist_{\cG(S,\mathscr{S})}(G,H)\leq \psi(G,H)$ Will exhibits a sequence of flip of size $\frac{\delta(G,H)}{2}
        - k$ given a symmetric circuit partition of $G∆H$ of size $k$.
        Even if it was not originally designed for multigraphs, the sequence is also
        valid in the case where $G$ or $H$ have  multiple edges.
        Hence we have $dist_{\cG(S)}(G,H)\leq \psi(G,H)$

To get the other direction, in \cite{Will99,Bereg17} the authors proved the following.
\begin{lemma}[\cite{Will99,Bereg17}]
    Let $G'$ be a graph obtained by a flip from $G$. Then
    \[ \psi(G',H)\geq \psi(G,H)-1\]
\end{lemma}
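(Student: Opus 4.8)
The plan is to reduce the statement to a purely combinatorial claim about how a single alternating $4$-cycle interacts with a maximum symmetric circuit partition, and then to control the resulting change in the number of circuits.

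First I would record the elementary effect of the flip. Writing $\chi(e)=m_G(e)-m_H(e)$ for each potential edge $e$, the edges of $G-H$ are exactly those with $\chi(e)>0$ and those of $H-G$ those with $\chi(e)<0$. The flip $(ab,cd)\rightarrow(ac,bd)$ turns $G$ into $G'$ by decreasing $\chi(ab)$ and $\chi(cd)$ by one and increasing $\chi(ac)$ and $\chi(bd)$ by one, so $\chi_{G'}-\chi_{G}$ is supported on the four edges of the alternating $4$-cycle $Q=a\,b\,d\,c\,a$, which is precisely $G\Delta G'$. Consequently $G\Delta H$ and $G'\Delta H$ differ, as signed edge multisets, exactly by $Q$. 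Since $\tfrac{\delta(G,H)}{2}=|E(G-H)|=\sum_e\max(\chi(e),0)$ and the flip alters only four values of $\chi$ each by $\pm1$, the quantity $t:=|E(G-H)|-|E(G'-H)|$ lies in $\{-2,-1,0,1,2\}$.

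Next I would rewrite the goal. Using $\psi(G,H)=\tfrac{\delta(G,H)}{2}-m(G,H)$, the claimed inequality $\psi(G',H)\ge\psi(G,H)-1$ is equivalent to
\[ m(G,H)\;\ge\;m(G',H)+t-1. \]
So it suffices, starting from a \emph{maximum} symmetric circuit partition $\mathcal{P}'$ of $G'\Delta H$ (having $m(G',H)$ circuits), to build a symmetric circuit partition of $G\Delta H$ with at least $m(G',H)+t-1$ circuits. This is where $Q$ enters: since $G\Delta H$ and $G'\Delta H$ differ exactly by $Q$, I would incorporate $Q$ into $\mathcal{P}'$ by taking the symmetric difference of $Q$ with the circuits of $\mathcal{P}'$ it meets and re-tracing the result into alternating circuits covering all edges of $G\Delta H$. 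Because $Q$ uses only four edges and touches only the four vertices $a,b,c,d$, this gluing can merge or split only a bounded number of circuits of $\mathcal{P}'$, and the task is to verify that the number of circuits never drops below $m(G',H)+t-1$.

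I expect the case analysis to be the main obstacle. The natural cases are indexed by the signs of $\chi$ at the four edges $ab,cd,ac,bd$ in $G'\Delta H$, since these simultaneously fix the value of $t$ and the way $Q$ attaches to $\mathcal{P}'$. The extreme case is when the flip cancels two edges of $G-H$ and two of $H-G$ at once, so $t=2$ and $Q$ is edge-disjoint from $G'\Delta H$; there $Q$ is itself an alternating circuit and can simply be adjoined to $\mathcal{P}'$ as one extra circuit, giving $m(G',H)+1=m(G',H)+t-1$ circuits and meeting the bound with equality. In every remaining case $t\le 1$, so the required lower bound on the number of circuits is weaker, and one checks that rerouting $Q$ through the at most four circuits of $\mathcal{P}'$ it touches loses at most $1-t$ circuits; verifying this uniformly, including the degenerate subcases where two of $a,b,c,d$ lie in one circuit or where parallel edges yield length-$2$ circuits, is the technical heart. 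Finally I would note that the argument uses only that $G\Delta G'$ is a single alternating $4$-cycle, so it applies verbatim to loop-free multigraphs, which is exactly what Theorem~\ref{theorem:distmulti} requires.
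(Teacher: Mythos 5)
Your setup is correct and is essentially the strategy of the sources the paper cites for this lemma (the paper itself does not reprove it; it defers to Will and to Bereg--Ito and only remarks that the translation to multigraphs is immediate). In particular, the identification of $G\Delta G'$ with a single alternating $4$-cycle $Q$, the observation that $t\in\{-2,\dots,2\}$, and the algebraic reformulation of the goal as $m(G,H)\ge m(G',H)+t-1$ are all right, and the plan of perturbing a maximum symmetric circuit partition $\mathcal{P}'$ of $G'\Delta H$ by $Q$ is the right one. The $t=2$ case is also handled correctly in substance (though ``$Q$ is edge-disjoint from $G'\Delta H$'' is only true at the level of edge \emph{copies}: the underlying edges of $Q$ may still occur in $G'\Delta H$ with positive multiplicity, and what matters is that the four extra copies are unused by $\mathcal{P}'$, so $Q$ can be adjoined as a new circuit).

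The genuine gap is that the remaining cases, which you yourself call ``the technical heart,'' are only asserted, and they are the entire content of the lemma. When $t\le 1$, at least one edge of $Q$ has a copy \emph{deleted} in passing from $G'\Delta H$ to $G\Delta H$; each circuit of $\mathcal{P}'$ using such a copy is cut into alternating paths whose endpoints carry a parity (one end must be extended by a $G-H$ edge, the other by an $H-G$ edge), and these paths must be re-glued---together with the newly added copies of the other $Q$-edges---into closed alternating circuits covering $G\Delta H$ exactly. Nothing in the proposal argues that this re-gluing is always possible, nor that it destroys at most $1-t$ circuits; the alternation constraint at the four vertices $a,b,c,d$, the subcase where several $Q$-edges lie on the same circuit of $\mathcal{P}'$, and the multigraph degeneracies (length-$2$ circuits, $Q$ meeting a circuit in a vertex twice) are exactly where the claim could fail if stated carelessly. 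The bound is true, but as written the proposal reduces the lemma to an unproved claim of essentially the same difficulty, so it cannot be accepted as a proof without that case analysis being carried out.
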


We remark that the translation of the proof in \cite{Bereg17} for multigraphs is immediate, which proves that $dist_{\cG(S)}(G,H)\leq \psi(G,H)$ and thus proves Theorem \ref{theorem:distmulti}.

Since the computing $m(G,H)$ is NP-complete, Theorem \ref{theorem:distmulti} ensures that computing $dist_{\cG(S)}(G,H)$ is also NP-complete.  Note that the inclusion in NP is immediate since
there always exist transformation of linear size, and then a polynomial size certificate.

\begin{corollary}\label{cor:dist-multi-NP-hard}
    Given two loop-free (multi)graphs $G$ and $H$, computing
    $dist_{\cG(S)}(G,H)$ is NP-complete.
\end{corollary}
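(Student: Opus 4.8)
The plan is to verify the two halves of NP-completeness separately, both of which follow quickly from the identity $dist_{\cG(S)}(G,H)=\psi(G,H)$ established in Theorem~\ref{theorem:distmulti}. Formally I treat the decision version of the problem: given loop-free multigraphs $G,H$ realizing the same sequence $S$ and an integer $k$, decide whether $dist_{\cG(S)}(G,H)\le k$.

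For membership in NP, I would take as a certificate a sequence of flips of length at most $k$. Since $\psi(G,H)\le \delta(G,H)/2$, the value $dist_{\cG(S)}(G,H)$ is bounded by a polynomial (indeed linear) function of the input size, so for every yes-instance there exists a transformation of polynomial length, and hence a certificate of polynomial size. Verification is clearly polynomial: one checks that each step is a valid flip between loop-free multigraphs (that no multiplicity ever becomes negative and no loop is created) and that the final graph equals $H$. This is exactly the remark made just before the statement.

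For NP-hardness I would reduce from the problem of computing $m(G,H)$, the maximum size of a symmetric circuit partition of $G\Delta H$, which is NP-complete by Will~\cite{Will99}. The key point is that $\delta(G,H)$ is computable in polynomial time, so Theorem~\ref{theorem:distmulti} yields the polynomial-time-computable affine relation $m(G,H)=\frac{\delta(G,H)}{2}-dist_{\cG(S)}(G,H)$. Consequently the decision question ``$m(G,H)\ge t$?'' is equivalent to ``$dist_{\cG(S)}(G,H)\le \frac{\delta(G,H)}{2}-t$?'', which is a polynomial-time many-one reduction. Since Will's hardness instances may be taken to be simple graphs, and simple graphs are a special case of loop-free multigraphs, those instances are legitimate inputs for $dist_{\cG(S)}$; thus the reduction lands in the intended problem and NP-hardness follows.

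The substance of the argument is entirely carried by Theorem~\ref{theorem:distmulti}, so there is no genuine obstacle remaining at this stage. The only point requiring a moment of care is bookkeeping: one must check that the affine transformation $t\mapsto \frac{\delta(G,H)}{2}-t$ used in the reduction is computable from the instance alone (it is, since $\delta$ depends only on $G$ and $H$) and that the multigraph instances fed to $dist_{\cG(S)}$ are precisely those on which the equality $dist_{\cG(S)}=\psi$ is guaranteed. Both are immediate, and combining NP-membership with this reduction proves the corollary.
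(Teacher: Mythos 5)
Your proposal is correct and follows the same route as the paper: NP-membership via the existence of a linear-length (hence polynomial-size) flip sequence, and NP-hardness by reducing the computation of $m(G,H)$ to that of $dist_{\cG(S)}(G,H)$ through the affine identity of Theorem~\ref{theorem:distmulti}. You merely spell out the reduction and the certificate verification in more detail than the paper does.
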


Furthermore the restriction of connected graphs does not change the complexity.

\begin{corollary}\label{cor:dist-multi-connected-NP-hard}
    Given two connected loop-free (multi)graphs $G$ and $H$, computing
    $dist_{\cG(S,\mathscr{C})}(G,H)$ is NP-complete.
\end{corollary}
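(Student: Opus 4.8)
The plan is to prove NP-completeness by reducing the unrestricted multigraph distance problem, shown NP-complete in Corollary~\ref{cor:dist-multi-NP-hard} (equivalently, the NP-hard problem of computing $m(G,H)$), to the connected version. Membership in NP is already covered by the observation preceding Corollary~\ref{cor:dist-multi-NP-hard}: Theorem~\ref{thm:connected} yields a transformation of length at most $\delta(G,H)$ entirely inside $\cG(S,\mathscr{C})$, so a shortest sequence has polynomial length and is a valid certificate, each of whose flips can be checked to maintain connectivity in polynomial time.

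For the hardness, given an instance $(G,H)$ of the unrestricted problem on vertex set $V=\{v_1,\dots,v_n\}$ realizing $S$, I would introduce a single new apex vertex $x$ and form $G'$ (resp.\ $H'$) by adding to $G$ (resp.\ $H$) one edge $xv_i$ for every $i\le n$. The resulting degree sequence $S'$ assigns degree $n$ to $x$ and increases each $d_i$ by one, no loop is created, and both $G'$ and $H'$ are connected since $x$ is adjacent to every original vertex; hence $(G',H')$ is a legitimate instance of the connected distance problem and the construction is polynomial.

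The crux is the exact identity $dist_{\cG(S',\mathscr{C})}(G',H') = \psi(G,H)$. For the upper bound I would lift an optimal unrestricted transformation from $G$ to $H$, which has length $\psi(G,H)$ by Theorem~\ref{theorem:distmulti}, by performing exactly the same flips in $G'$; each such flip only touches vertices of $V$, and since $x$ stays adjacent to all of $V$ throughout, every intermediate graph remains connected, giving $dist_{\cG(S',\mathscr{C})}(G',H')\le \psi(G,H)$. For the lower bound I would route through the unrestricted distance rather than analyze connected transformations directly: as $\cG(S',\mathscr{C})$ is a subgraph of $\cG(S')$, we have $dist_{\cG(S',\mathscr{C})}(G',H')\ge dist_{\cG(S')}(G',H')=\psi(G',H')$ by Theorem~\ref{theorem:distmulti}. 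The star edges at $x$ are identical in $G'$ and $H'$ and therefore cancel in the symmetric difference, so $G'\Delta H'=G\Delta H$ as multigraphs; consequently $\delta(G',H')=\delta(G,H)$ and the symmetric circuit partitions of $G'\Delta H'$ coincide with those of $G\Delta H$, whence $m(G',H')=m(G,H)$ and $\psi(G',H')=\psi(G,H)$.

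Combining the two bounds gives $dist_{\cG(S',\mathscr{C})}(G',H')=\psi(G,H)=\frac{\delta(G,H)}{2}-m(G,H)$. Since $\delta(G,H)$ is computable in polynomial time, a polynomial-time algorithm for the connected distance would output $m(G,H)$ in polynomial time, contradicting its NP-hardness. The main obstacle I anticipate is justifying the lower bound cleanly, and the key idea there is precisely that adding the apex vertex leaves the symmetric difference invariant, so the subgraph inclusion $\cG(S',\mathscr{C})\subseteq\cG(S')$ together with Theorem~\ref{theorem:distmulti} transfers the exact value $\psi$ to the connected setting and makes the reduction yield an equality rather than merely an approximation.
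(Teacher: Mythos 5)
Your proposal is correct and is essentially the paper's own reduction: both add a universal (apex) vertex to $G$ and $H$, observe that the symmetric difference---and hence $\psi$---is unchanged, and combine the subgraph inequality $dist_{\cG(S')}(G',H')\le dist_{\cG(S',\mathscr{C})}(G',H')$ with a connectivity-preserving optimal transformation to get equality via Theorem~\ref{theorem:distmulti}. The only cosmetic difference is that you lift an optimal transformation of $G$ into $H$ (which never touches the apex by construction), whereas the paper argues that a shortest sequence in $\cG(S')$ only flips edges of $G'\Delta H'$ and so keeps the apex universal; both yield the same conclusion.
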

\begin{proof}
  We show that the computation of $dist_{\cG(S)}$ can be reduced to the computation of $dist_{\cG(S,\mathscr{C})}$. Let $G$ and $H$ be two graphs and let $G'$ and $H'$ be the graphs obtained from $G$ and $H$ by adding a universal vertex $u$. We show that $dist_{\cG(S)}(G,H) = dist_{\cG(S,\mathscr{C})}(G',H')$. Indeed, notice that we have $G \Delta H=G' \Delta H'$. So by Theorem~\ref{theorem:distmulti}, $dist_{\cG(S)}(G,H)=dist_{\cG(S)}(G',H')$. First since $\cG(S,\mathscr{C})$ is a subgraph of $\cG(S,\mathscr{C})$, $dist_{\cG(S)}(G',H') \leq dist_{\cG(S,\mathscr{C})}(G',H')$. Now consider the shortest path between $G'$ and $H'$ in $\cG(S)$. Since the shortest sequence of flips between $G'$ and $H'$ in $\cG(S)$ only modify edges of $G'\Delta H'$, every intermediate graph has $u$ as universal vertex. Thus they are all connected, which means that the shortest path in $\cG(S)$ is also a path in $\cG(S,\mathscr{C})$. So $dist_{\cG(S)}(G',H')\geq dist_{\cG(S)}(G',H')$, which concludes the proof.
\end{proof}

Since computing the distance in $\cG(S,\mathcal{CC})$ is a generalization of connected loop-free multigraphs, computing the distance in $\cG(S,\mathcal{CC})$ also is NP-hard. The belonging to NP is immediate since we proved in Section~\ref{sec:connectivity} that the distance between any pair of graphs is polynomial.

\end{document}